% cambridge6Atightguide.tex
% for the suite of standard Cambridge designs
% 2010/09/09, v2.10

  \NeedsTeXFormat{LaTeX2e}[1996/06/01]

  \documentclass[multi]{cambridge6Atight}

  \usepackage{natbib}
  \usepackage{rotating}
  \usepackage{floatpag}
  \rotfloatpagestyle{empty}

  \usepackage{amsmath}% if you are using this package,
                      % it must be loaded before amsthm.sty
  \usepackage{amsthm,amssymb}
  \usepackage{graphicx}

% indexes
% uncomment the relevant set of commands

% for a single index
% \usepackage{makeidx}
% \makeindex

% for multiple indexes using multind.sty
%  \usepackage{multind}\ProvidesPackage{multind}
%  \makeindex{authors}
%  \makeindex{subject}

% for multiple indexes using index.sty
% \usepackage{index}
% \newindex{aut}{adx}{and}{Author index}
% \makeindex

%%%%%%%% END: CAMBRIDGE FORMATTING

%%%%%%% BEGIN: OMAMA formatting
\usepackage[utf8]{inputenc}
\usepackage{bbm}

\usepackage{graphicx}
\usepackage{pgfplots}
\usepackage{tikz}
\usetikzlibrary{shapes, patterns, decorations, fit, intersections}

% Macros:

% When a new term is defined, you may use either \df or \idf. The latter will
%   create an entry in the subject index in the book. You may use is as
%   \idf{core}{Core of a game} (to give an example where the entry might differ
%   from the term being defined). In that case, the first argument is
%   the term to be displayed and the second argument the entry in the index.

\newcommand{\idf}[2]{\textit{#1}\index{#2}}

\renewcommand{\index}[1]{}

% Additional macros/notational conventions:

% In an abstract matching environment, use workers and firms for the two sides
%   of the market (instead of, say, men and women). Use greek letters (\mu, \nu,
%   etc) for matching. Use \succ and \succeq for preferences (these are
%   preferred to \mathrel{R} and \mathrel{P}).

% \C for complex numbers
% \R for real numbers
% \Z for the integers
% \Xplus for the non-negative X
% \one for the function that is identically=1
% \dotproduct for inner product
% \prob for probability
% \ex for expectation
% \norm for norm
% \card for cadinality
% \abs for absolute value

\newcommand*{\abs}[1]{\lvert#1\rvert}

% Theorem-type environments

\newtheorem{theorem}{Theorem}
\newtheorem{lemma}[theorem]{Lemma}
\newtheorem{corollary}[theorem]{Corollary}

\newtheorem{claim}[theorem]{Claim}

\newtheorem{fact}[theorem]{Fact}

\theoremstyle{definition}
\newtheorem{definition}[theorem]{Definition}
\newtheorem{remark}[theorem]{Remark}

%\newtheorem{exercise}[theorem]{Exercise}

%%%%%%%%
%% see chapter 3 for details
%  \theoremstyle{plain}% default
%  \newtheorem{theorem}{Theorem}[chapter]
%  \newtheorem{lemma}[theorem]{Lemma}
%  \newtheorem*{corollary}{Corollary}
%
%  \theoremstyle{definition}
%  \newtheorem{definition}[theorem]{Definition}
%  \newtheorem{example}[theorem]{Example}
%
%  \theoremstyle{remark}
%  \newtheorem*{remark}{Remark}
%  \newtheorem*{case}{Case}
%
%%%%%%%%%
%\theoremstyle{plain}
%\newtheorem{proposition}[theorem]{Proposition}
%\newtheorem{fact}[theorem]{Fact}
%\newtheorem{claim}[theorem]{Claim}
%
%\theoremstyle{remark}
%\newtheorem{myExercise}{Exercise}[chapter]
%

%%%%%%%%%%%%%%%%%%%%%%%%%%%%%%%%%%%%%

%%%%%%%%%%%%%%%%% BEGIN: Alex's formatting

\usepackage{slivkins-setup,ch-macro,nicefrac}

\usepackage{url,hyperref}

\usepackage{color-edits}
\addauthor{as}{red}      % as for Alex
% e.g. for Alex, provides \asedit{}, \ascomment{} and \asdelete{}.

%algorithms
\usepackage[linesnumbered,lined,boxed,algochapter]{algorithm2e}
%\RestyleAlgo{boxed}
%\LinesNumbered
\SetKwInOut{Parameter}{parameter}

%%%%%%%%%%%%%%%%% END: Alex's formatting

  \begin{document}

% \title{blah}

  %\frontmatter
 % \maketitle
 % \tableofcontents
 % \listoffigures
 % \listoftables
  %\listofcontributors

  %\mainmatter

\alphafootnotes
\author[Aleksandrs Slivkins]{Aleksandrs Slivkins}
\chapter{Exploration and Persuasion}

%\footnotetext[1]{Microsoft Research, New York, NY. Email: slivkins@microsoft.com.}

%\footnotetext[2]{The technical developments in Sections 6-8 are based on \citep[Ch. 11]{slivkins-MABbook}.}

\copyrightline{%First version: Dec 2020. This version: March 2021.
Available from the author's website since 2021.
%Current version is at \url{slivkins.com/work/ExplPers.pdf}.
\newline
Authors' address and email: Microsoft Research, New York, NY; slivkins@microsoft.com.
\vspace{2mm}\newline
\textbf{Acknowledgements.} The author is grateful to Ian Ball, Brendan Lucier and David Parkes for careful reading of the manuscript and valuable suggestions. Sections 6-8 are based on \citep[Ch. 11]{slivkins-MABbook}.
\vspace{2mm}\newline
[Exploration and Persuasion \copyright Aleksandrs Slivkins 2021] has been published in 2023 a chapter in \emph{Online and Matching-Based Markets} \copyright Cambridge University Press. Not for distribution.}

\arabicfootnotes

\vspace{-20mm}

\begin{abstract}
\textsl{How to incentivize self-interested agents to explore when they prefer to exploit?}

Consider a population of self-interested agents that make decisions under uncertainty. They \emph{explore} to acquire new information and \emph{exploit} this information to make good decisions. Collectively they need to balance these two objectives, but their incentives are skewed toward exploitation. This is because exploration is costly, but its benefits are spread over many agents in the future.

\emph{Incentivized Exploration} addresses this issue via strategic communication. Consider a benign ``principal" which can communicate with the agents and make recommendations, but cannot force the agents to comply. Moreover, suppose the principal can observe the agents' decisions and the outcomes of these decisions. The goal is to design a communication and recommendation policy which (i) achieves a desirable balance between exploration and exploitation, and (ii) incentivizes the agents to follow recommendations. What makes it feasible is \emph{information asymmetry}: the principal knows more than any one agent, as it collects information from many. It is essential that the principal does not fully reveal all its knowledge to the agents.

Incentivized exploration combines two important problems in, resp., machine learning and theoretical economics. First, if agents always follow recommendations, the principal faces a \emph{multi-armed bandit} problem: essentially, design an algorithm that balances exploration and exploitation. Second, interaction with a single agent corresponds to \emph{Bayesian persuasion}, where a principal leverages information asymmetry to convince an agent to take a particular action. We provide a brief but self-contained introduction to each problem through the lens of incentivized exploration, solving a key special case of the former as a sub-problem of the latter.

%\secref{sec:intro}
%\secref{sec:bandits}
%\secref{sec:BP}
%\secref{sec:revelation}
%\secref{sec:HP}
%\secref{sec:repeatedHP}
%\secref{sec:fighting-chance}
%\secref{sec:discussion}

The chapter is structured as follows:
\begin{OneLiners}
\item[1.] Motivation and problem formulation;
\item[2.] Connection to multi-armed bandits and optimal exploration for two arms;
\item[3.] Connection to Bayesian persuasion and optimal persuasion for a special case;
\item[4.] How much information to reveal: recommendations vs. full revelation;
\item[5.] ``Hidden persuasion" : a general technique for Bayesian persuasion;
\item[6.] Incentivized exploration via ``hidden persuasion";
\item[7.] A necessary and sufficient assumption on the prior;
\item[8.] Discussion and literature review.
\end{OneLiners}

%The technical developments in Sections 6-8
%Sections~\ref{sec:HP}-\ref{sec:fighting-chance}
%are based on \citep[Ch. 11]{slivkins-MABbook}.

\end{abstract}

\section{Motivation and problem formulation}
\label{sec:intro}

Our motivation comes from recommendation systems. Users therein consume information from the previous users, and produce information for the future. For example, a decision to dine in a particular restaurant may be based on the existing reviews, and may lead to some new subjective observations about this restaurant. This new information can be consumed either directly (via a review, photo, tweet, etc.) or indirectly through aggregations, summarizations or recommendations, and can help others make similar choices in similar circumstances in a more informed way. This phenomenon applies very broadly, to the choice of a product or experience, be it a movie, hotel, book, home appliance, or virtually any other consumer's choice. Similar issues, albeit with higher stakes, arise in health and lifestyle decisions such as adjusting exercise routines or selecting a doctor or a hospital. Collecting, aggregating and presenting users' observations is a crucial value proposition of numerous businesses in the modern economy.

%\footnote{To take a few examples: Netflix for movies, Yelp for restaurants, TripAdvisor for vacations,  Amazon for products, Waze for driving routes, Fitbit for exercise routines, SuggestADoctor for doctors.}

When self-interested individuals (\emph{agents}) engage in the information-revealing decisions discussed above, individual and collective incentives are misaligned. If a social planner were to direct the agents, she would trade off exploration and exploitation so as to maximize the social welfare. Absent such social planner, each agent's incentives are typically skewed in favor of exploitation. This is because agents tend to be myopic in their decisions, and (even when they are somewhat forward-looking) they prefer to side-step the costs of exploration and instead benefit from exploration done by others. Therefore, the society as a whole may suffer from insufficient amount of exploration. In particular, if a given alternative appears suboptimal given the information available so far, however sparse and incomplete, then this alternative may remain unexplored forever.

Let us consider a simple example in which the agents fail to explore. Suppose there are two actions $a\in\{1,2\}$ with deterministic rewards $\mu_1, \mu_2$ that are initially unknown. Each $\mu_a$ is drawn independently from a known Bayesian prior such that $\E[\mu_1]> \E[\mu_2]$. Agents arrive sequentially: each agent chooses an action, observes its reward and reveals it to all subsequent agents. Then the first agent chooses action $1$ and reveals $\mu_1$. If $\mu_1>\E[\mu_2]$, then all future agents also choose arm $1$. So, action $2$ never gets chosen, even though it may be better. This is particularly wasteful if the prior assigns a large probability to the event $\{\mu_2 \gg \mu_1>\E[\mu_2]\}$.

The problem of \textbf{incentivized exploration}
\index{incentivized exploration}
asks how to incentivize the agents to explore. We consider a \emph{principal} who cannot control the agents, but can communicate with them, \eg recommend an action and observe the outcome later on.
Such a principal would typically be implemented via a website, either one dedicated to recommendations and feedback collection (\eg Yelp, Waze), or one that actually provides the product or experience being recommended (\eg Netflix, Amazon). While the principal would often be a for-profit company, its goal for our purposes would typically be well-aligned with the social welfare.

We posit that the principal creates incentives \emph{only} via communication, rather monetary incentives such as rebates or discounts. What makes it feasible is \idf{information asymmetry}: the principal collects observations from the past agents and therefore has more information than any one agent. Aware of this information asymmetry, agents realize that they may benefit from algorithm's advice. In particular, they may be incentivized to follow the principal's recommendations, even if these recommendations sometimes include exploration.

Incentivizing exploration is a non-trivial task even in the simple example described above, and even if there are only two agents. This is \emph{Bayesian persuasion}, a well-studied problem in theoretical economics. When rewards are noisy, incentivized exploration is non-trivial even when incentives are not an issue. This is \emph{multi-armed bandits}, a well-studied problem in machine learning. Incentivized exploration addresses both problems simultaneously.

%We provide a brief but self-contained introduction to each problem through the lens of incentivized exploration, solving a key special case of the former as a sub-problem of the latter.

%The planner needs to decide which information to reveal to each agent so as to incentivize the population of agents to balance exploration and exploitation.

%\subsection{Problem formulation: incentivized exploration}
%\label{BIC:sec:stt}

%We capture the problem discussed above as a bandit problem with auxiliary constraints that arise due to incentives. The problem formulation has two distinct parts: the ``algorithmic" part, which is essentially about Bayesian bandits, and the ``economic" part, which is about agents' knowledge and incentives. Such ``two-part" structure is very common in the area of \emph{algorithmic economics}. Each part individually is very standard, according to the literature on bandits and theoretical economics, respectively. It is their combination that leads to a novel and interesting problem.

\xhdr{Problem formulation.}
Let us put forward a concrete problem formulation which we consider throughout this chapter. While idealized, it captures the essence of incentivized exploraton. There are $T$ rounds and two possible actions, denoted, resp.,
    $t\in [T]$ and $a\in \{1,2\}$.
The actions are also referred to as \emph{arms}, we use these terms interchangeably. In each round $t$, the principal chooses a message $\msg_t$ from some fixed universe $\msgSpace$. Then a new agent arrives, call it agent $t$, observes this message, chooses an arm $a_t$, and collects a reward $r_t\in [0,1]$ for this arm. The action and the reward are observed by the principal, but not by the other agents.

%The principal also observes an initial signal $\iSig$ from some universe $\sigSpace$.

\begin{BoxedProblem}{Incentivized exploration}
\noindent In each round $t = 1,2,3 \LDOTS T $:
\vspace{-2mm}
\begin{enumerate}
  \item Principal chooses its message $\msg_t \in \msgSpace$.
  \item Agent $t$ arrives, observes $\msg_t$, and chooses an arm $a_t \in \{1,2\}$.
  \item (Agent's) reward $r_t\in [0,1]$ is realized.
  \item Action $a_t$ and reward $r_t$ are observed by the principal.
\end{enumerate}
\vspace{-2mm}
\end{BoxedProblem}

A given round reveals a reward $r_t$ for the chosen arm and no other information. In particular, the reward for the other arm -- had this arm been chosen by the agent -- is not revealed. This is precisely what necessitates exploration.

The rewards are generated as follows. A mean reward vector $\mu\in[0,1]^2$ is drawn from a Bayesian prior $\mP$ before the game starts, where $\mu_a$ is the mean reward of each arm $a\in\{1,2\}$. Each time an arm is chosen, the reward is realized as an independent draw from some fixed distribution specific to this arm. Specifically, there is a parameterized family
    $(\mD_x:\;x\in [0,1])$
of reward distributions such that $\E[\mD_x]=x$, and the reward distribution for each arm $a$ is defined as $\mD_x$ with $x=\mu_a$. The distribution family and the prior $\mP$ are known to the principal and the agents, whereas the mean reward vector $\mu$ is not known to anybody.

Let
    $\mu_a^0 = \E[\mu_a]$
denote the prior mean reward for arm $a$.
W.l.o.g., we assume that $\mu_1^0 \geq \mu_2^0$, \ie arm $1$ is weakly preferred according to the prior. We allow \emph{correlated priors}, \ie random variables $\mu_1, \mu_2$ can be correlated. An important special case is \emph{independent priors}, when $\mu_1, \mu_2$ are mutually independent. The paradigmatic reward distributions are \emph{deterministic rewards} (when $\mD_x$ always returns $x$) and \emph{Bernoulli rewards} (when $\mD_x$ is a Bernoulli distribution).

Agents choose arms so as to maximize their conditional expected rewards. Let us unpack this statement carefully. Each agent $t$ knows the messaging policy and the round she arrives in. As mentioned above, she knows the prior and the distribution family, but not the mean reward vector $\mu$. She does not directly observe anything from the previous rounds, except via the principal's message $\msg_t$. She strives to maximize $\E[r_t]$ given all available information. This task is ill-defined unless one specifies what the previous agents do; so, she assumes that all previous agents use the same decision rule. Formally, we define agents' behavior recursively:
\begin{align}\label{eq:agents-IE}
a_t = \argmaxARMS{ \E\sbr{ \mu_a \mid \mE_{t-1}, \msg_t  }},
\end{align}
where $\mE_{t-1}$ is the event that \refeq{eq:agents-IE} holds for all previous agents $s<t$. The $\min$ provides that the ties in $\argmax$ are always broken in favor of arm $1$ (and this is known to the principal and the agents). We posit this tie-breaking rule throughout to simplify presentation. Note that it only makes exploration more difficult.

During the game, the principal chooses messages $\msg_t\in \msgSpace$ according to some algorithm called the \emph{messaging policy}. \index{messaging policy} The latter is chosen by the principal before the game, along with the universe $\msgSpace$ (as a superset of all possible messages).

The principal's objective is to maximize the total reward
    $\REW := \sum_{t\in [T]} r_t$;
we make this objective more precise later, see \refeq{eq:regret}. One could also consider a more basic and immediate objective: sample arm $2$; we call it the \emph{pure exploration} objective, as it neglects exploitation. There are several ways to formalize it, \eg minimize the time horizon $T$ so as to guarantee that arm $2$ is sampled at least once.

\xhdr{Preliminaries: conditional expectations.}
For a more elementary exposition, realized rewards and messages can only take finitely many values. Then the notion of conditional expectation simplifies as follows. Let $X$ be a real-valued random variable, and let $Y$ be another random variable with arbitrary (not necessarily real-valued) but finite support $\mathcal{Y}$. Conditional expectation of $X$ given $Y$ is itself a random variable:
    $\E[X| Y] := f(Y)$,
where
    $f(y) = \E[X\mid Y=y]$
for all $y\in \mathcal{Y}$.
%    \[ \E[X\mid Y] := F(Y) \eqWHERE  F(y) = \E[X\mid Y=y] \quad (\forall y\in \mathcal{Y}).\]
Conditional expectation given an event $E$ can be expressed as
    $\E[X | E] = \E[X | \indE{E}]$.
%We are particularly interested in
%    $\E[\,\cdot \mid \samples{n}]$,
%the posterior mean reward after $n$ samples from arm $1$.

We often use the following fact, (a version of) the \emph{Law of Iterated Expectation}.

\begin{fact}\label{BIC:fact:iterated-expectations}
Suppose $X,Y$ are as above, and random variable $Z$ is determined by $Y$ and some other random variable $Z_0$ such that $X$ and $Z_0$ are independent (\eg the algorithm's random seed). Then
$ \E \sbr{ \E[X|Y] \mid Z } = \E[X|Z]$.
\end{fact}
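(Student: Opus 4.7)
The plan is to reduce this to the textbook tower property of conditional expectation, which says that if $\sigma(Z)\subseteq \sigma(W)$ then $\E[\,\E[X\mid W]\mid Z\,]=\E[X\mid Z]$. The subtlety here is that $Z$ is not a function of $Y$ alone but of the pair $(Y,Z_0)$, so in general $\sigma(Z)\not\subseteq\sigma(Y)$ and the textbook version cannot be invoked directly with $W=Y$. Instead I will apply it with $W=(Y,Z_0)$: since $Z$ is by hypothesis determined by $(Y,Z_0)$, we do have $\sigma(Z)\subseteq\sigma(Y,Z_0)$, and the standard tower property yields
\[
  \E\bigl[\,\E[X\mid Y,Z_0]\;\big|\;Z\bigr]=\E[X\mid Z].
\]

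The remaining step is to identify $\E[X\mid Y,Z_0]$ with $\E[X\mid Y]$. This is where independence enters: reading $Z_0$ as the principal's (or algorithm's) random seed, $Z_0$ is independent of the pair $(X,Y)$, so observing $Z_0$ in addition to $Y$ conveys nothing about $X$. Concretely, for every $y$ in the support of $Y$ and every $z_0$ in the (finite) support of $Z_0$,
\[
  \Pr[X=x\mid Y=y,\,Z_0=z_0]=\Pr[X=x\mid Y=y]
\]
by independence of $Z_0$ and $(X,Y)$, so the two conditional expectations of $X$ coincide as random variables. Plugging this identity into the display above gives the claim.

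The only real obstacle is being careful about how the independence hypothesis is read. It is tempting to interpret ``$X$ and $Z_0$ are independent'' as mere marginal independence, but that is not enough: one can cook up toy examples (e.g., $X,Y$ independent fair Bernoullis with $Z_0=X\oplus Y$) in which $X\perp Z_0$ marginally and yet $\E[X\mid Y,Z_0]\neq \E[X\mid Y]$. The fact should be read under the natural interpretation that $Z_0$ is independent of the joint pair $(X,Y)$, which is exactly how an algorithm's fresh random seed behaves in every application of the fact in this chapter; under that reading the argument above goes through.
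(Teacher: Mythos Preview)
The paper does not actually prove this fact; it is stated as a version of the Law of Iterated Expectation and then used repeatedly without further justification. So there is no ``paper's own proof'' to compare against. That said, your argument is the standard and correct one: apply the tower property with $W=(Y,Z_0)$, which is legitimate since $\sigma(Z)\subseteq\sigma(Y,Z_0)$, and then identify $\E[X\mid Y,Z_0]$ with $\E[X\mid Y]$ using the independence hypothesis.

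You have also put your finger on the one genuine issue, namely that the hypothesis as literally written (``$X$ and $Z_0$ are independent'') is too weak, and your $X\oplus Y$ counterexample shows this cleanly. The intended reading---consistent with the parenthetical ``\eg the algorithm's random seed'' and with every application of the fact in the chapter---is that $Z_0$ is independent of the pair $(X,Y)$, and under that reading your proof is complete. It would be worth stating this strengthened hypothesis explicitly at the top of the proof rather than only in the closing paragraph, so the reader knows from the outset what is being assumed.
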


%order the arms according to their prior mean rewards, so that
%    $ \mu_1^0 \geq \mu_2^0 \geq \ldots \geq \mu_K^0$.
%In particular, arm $1$ is best according to the prior.

%Let $S_n(a)$ be the cumulative reward of arm $a$ in the first $n$ rounds that this arm is chosen.
%(To make this quantity well-defined, use the ``reward tape", as defined in Chapter~\ref{IID:sec:clean}.)

%The algorithm is \emph{strongly BIC} if Inequality \eqref{eqn:bic-constraint} is always strict.
%\end{definition}

\section{Connection to multi-armed bandits}
\label{sec:bandits}

%{\noindent \bf Multi-armed bandits.}
In the ``social planner" version of incentivized exploration, an algorithm directly chooses the arm $a_t$ in each round $t$, and balances exploration and exploitation. This is a well-studied problem called \emph{multi-armed bandits},\index{multi-armed bandits} or \emph{bandits} for short.
%\footnote{The name comes from a whimsical gambling scenario in which one faces several slot machines, a.k.a. one-armed bandits, which look identical but may have different payoff distributions.}

\begin{BoxedProblem}{Multi-armed bandits}
\noindent In each round $t = 1,2,3 \LDOTS T $:
\begin{OneLiners}
  \item[] algorithm chooses an arm $a_t \in \{1,2\}$ and collects reward $r_t\in [0,1]$.
\end{OneLiners}
\end{BoxedProblem}

Bandit problems arise in recommendation systems as well as many other application domains. The problem name comes from a whimsical gambling scenario in which a gambler faces several identically-looking slot machines, a.k.a. one-armed bandits. The initial motivation for bandit problems came from the design of ``ethical" medical trials, which attain useful scientific data while minimizing harm to the patients. Prominent modern applications concern the Web: from tuning the look and feel of a website, to choosing which content to display or highlight, to optimizing web search results, to placing ads on webpages. A cluster of applications pertains to economics: a frequent seller (resp., buyer) can optimize prices and the selection of products (resp., offers); a frequent auctioneer (resp., bidder), particularly in ad auctions, can adjust its auction (resp., bids) over time; an online labor market can improve the assignment of tasks, workers and prices. A computer system can experiment and learn over time, rather than rely on a rigid design, so as to optimize the protocols inside a computer, a data center, or a communication network.

These applications motivate a rich problem space, with many dimensions along which the models can be made more expressive and closer to reality. To take a few examples: there can be many arms and a known, helpful structure that binds across arms; reward distributions may change over time; the algorithm may observe auxiliary payoff-relevant information before and/or after it chooses an action; the algorithm may be bound by global supply/budget constraints. The model we consider here -- with i.i.d. rewards and no ``extras" -- is the basic version.

A standard performance measure is \emph{regret},\index{regret} which compares the algorithm's reward to that of the best arm. Specifically, it measures how much one \emph{regrets} not knowing the best arm in advance, in expectation over the random rewards:
\begin{align}\label{eq:regret}
R(T) := R(T\mid \mu) :=  T\cdot\max(\mu_1,\mu_2) - \E[\REW].
\end{align}
We think of the mean reward vector $\mu$ as a problem instance. Typical results upper-bound regret uniformly (\ie in the worst case) over all problem instances, as a function of the time horizon $T$ and possibly also some parameters of the problem instance. For such results, the algorithm does not need to know the prior. Some results consider \emph{Bayesian regret},\index{Bayesian regret} \ie regret in expectation over the Bayesian prior, defined as
%\begin{align}\label{eq:regret-Bayes}
$\BReg(T) := \E_{\mu\sim\PP}\sbr{ R(T\mid \mu) } $.
%\end{align}

While regret tracks cumulative performance over time, instantaneous performance can be measured by \emph{instantaneous regret} (a.k.a., \emph{simple regret}) at time $t$, defined as
    $\IReg(t) := \max(\mu_1,\mu_2) - \E[r_t]$.
That is, $\IReg(t)$ directly tracks how fast the algorithm converges to the best possible reward. Note that $R(T) = \sum_{t\in [T]} \IReg(t)$.

Going back to incentivized exploration, we observe that the messaging policy induces a bandit algorithm. (Indeed, the principal jointly with the agents constitute an algorithm that sequentially chooses actions and observes rewards according to the protocol of multi-armed bandits, and this algorithm's behavior is determined by the messaging policy.) Thus, one can directly compare messaging policies to bandit algorithms, using standard performance measures for the latter.

%\subsection{Connection to Bayesian persuasion}

%%%%%%%%%%%
\subsection{Optimal exploration for two-armed bandits}
\label{sec:bandits-optimal}

We optimally solve the ``social planner problem" described above, \ie the two-armed bandit problem with i.i.d. rewards. Along the way, we invoke some of the central ideas from the literature on multi-armed bandits, and discuss what types of guarantees one is looking for (which is somewhat subtle). Our solution can be used as a subroutine for incentivized exploration, as we explain in Section~\ref{sec:repeatedHP}.

%We use a simple algorithm: alternate the two arms until some arm appears better with high confidence, and only choose this arm from then on. This algorithm obtains regret
%    $R(t) = \tildeO(\sqrt{t})$
%for all problem instances and all rounds $t$, and instantaneous regret
%    $\IReg(t) = \tildeO(t^{-1/2})$.
%Moreover, it obtains regret
%    $R(T) = O(\log T)$
%for problem instances with constant \emph{gap} $|\mu_1-\mu_2|$. These regret rates are optimal up to constant factors among all bandit algorithms. Note that this algorithm adapts its exploration schedule to the observations. This property, called \emph{adaptive exploration}, is necessary to achieve the regret bounds listed above. Without adaptive exploration, one can only achieve regret
%    $R(T) = \tildeO(T^{2/3})$
%and simple regret $\IReg(t) = \tildeO(t^{-1/3})$. In this section, we specify and analyze the adaptive algorithm, as well as two paradigmatic non-adaptive algorithms. We also explain (some of) the impossibility results, \ie lower bounds on regret, without proving them.

First, let us handle the randomness in rewards by arguing that average rewards are close to their respective means; such arguments are known as \emph{concentration inequalities}. Let $n_{t,a}$ be number of times a given arm $t$ has been chosen before round $t$, and let $\bar{\mu}_{t,a}$ be the average reward of this arm in these rounds. Then
\begin{align}\label{eq:chernoff}
\Pr\sbr{ \abs{\bar{\mu}_{t,a} - \mu_a} \leq \rad_{t,a}} \geq 1-2/T^4
\eqWHERE \rad_{t,a} :=
\sqrt{2\log(T)/n_{t,a}}.
%\sqrt{\frac{2\log T}{n_{t,a}}}.
\end{align}
This holds for any bandit algorithm, as a direct application of \emph{Azuma-Hoeffding inequality} (a standard, generic concentration inequality), restated in a notation that is convenient for our purposes. Thus, with high probability it holds that
\begin{align}\label{eq:clean}
  \mu_a
    \in \sbr{ \LCB_{t,a},\,\UCB_{t,a} }
    := \sbr{ \bar{\mu}_{t,a} - \rad_{t,a},\; \bar{\mu}_{t,a} + \rad_{t,a} }
\end{align}
for all arms $a$ and all rounds $t$.
The interval in \refeq{eq:clean} has two key properties: it contains $\mu_a$ with high probability and it can be computed from data in round $t$. An interval with these two properties is called a \emph{confidence interval} for $\mu_a$ (at time $t$), and its endpoints are called, resp., upper/lower \emph{confidence bounds}.
%The term $\rad_{t,a}$ is called the \emph{confidence radius}.

We use a simple algorithm called \term{AdaptiveRace}, which alternates the arms until their confidence intervals are disjoint (so that one arm appears better with high confidence). More precisely: we choose an arm uniformly at random in each round $t$ until
    $\LCB_{t,a} > \UCB_{t,a'}$,
and use arm $a$ forever after.

%their confidence intervals are disjoint, so that one arm is better with high confidence, and only choose this arm from then on. Think of it as a race with an adaptive stopping time.

%\LinesNotNumbered
%\begin{algorithm}[h]
%    \SetAlgoLined
%    Chose an arm uniformly at random in each round $t$,
%        until $\LCB_{t,a} > \UCB_{t,a'}$ \;
%    Use arm $a$ forever after.
%    \caption{\term{AdaptiveRace}}
%    \label{IID:SE-two}
%\end{algorithm}

\begin{theorem}\label{thm:adaptiveRace}
Let $\Delta = |\mu_1-\mu_2|$. For each round $t$,
\term{AdaptiveRace} attains
\begin{align*}
R(t)
    = O\rbr{\min\rbr{ \sqrt{t\cdot  \log T},\,\tfrac{1}{\Delta}\cdot \log T }}
\quad\text{and}\quad
\IReg(t)
    = O\rbr{\sqrt{t^{-1}\cdot  \log T}}.
\end{align*}
\end{theorem}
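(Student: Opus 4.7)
The plan is to condition on a ``clean'' event $\mathcal{E}$ under which (i)~the confidence intervals of~\eqref{eq:clean} hold for all arms and all rounds $t\leq T$, and (ii)~during racing, $n_{t,a} \geq t/4$ for both arms. Part~(i) follows from~\eqref{eq:chernoff} by a union bound, part~(ii) from a Chernoff tail on a $\mathrm{Binomial}(t{-}1,\tfrac12)$; together $\Pr[\mathcal{E}^c]=O(1/T^2)$, so the contribution of $\mathcal{E}^c$ to the stated regret quantities is negligible.

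On $\mathcal{E}$, I would first show that racing terminates at some time $\tau=O(\log T/\Delta^2)$ and commits to the better arm. Correctness is immediate: $\LCB_{\tau,a}>\UCB_{\tau,a'}$ together with~\eqref{eq:clean} forces $\mu_a>\mu_{a'}$. For the bound on $\tau$, whenever racing is still active at round $t$ the stopping condition fails, so (assuming WLOG $\mu_1\geq\mu_2$) $\bar{\mu}_{t,1}-\rad_{t,1}\leq \bar{\mu}_{t,2}+\rad_{t,2}$; on $\mathcal{E}$ this chains into $\Delta \leq 2(\rad_{t,1}+\rad_{t,2}) = O(\sqrt{\log T/t})$ via part~(ii), which inverts to $t = O(\log T/\Delta^2)$.

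Both conclusions of the theorem then follow quickly. During racing, the expected per-round regret is $\Delta/2$ (the wrong arm is picked with probability $\tfrac12$) and after racing it is $0$, so $R(t) \leq \tfrac{\Delta}{2}\min(t,\tau) = O(\min(\Delta t,\,\log T/\Delta))$; the inequality $\min(x,y)\leq\sqrt{xy}$ converts the first term into $O(\sqrt{t\log T})$, giving the stated cumulative bound. For instantaneous regret at round $t$: if $\Delta$ is large enough that $\tau\leq t$ on $\mathcal{E}$, then $\IReg(t)=0$; otherwise racing may still be active and the bound above gives $\Delta = O(\sqrt{\log T/t})$, hence $\IReg(t) \leq \Delta/2 = O(\sqrt{\log T/t})$. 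The $O(1/T^2)$ contribution from $\mathcal{E}^c$ is absorbed in either case.

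The main delicate point will be part~(ii) of $\mathcal{E}$, since ``racing is active at round $t$'' is itself a stopping-time event. It is handled by coupling the algorithm to an i.i.d.\ $\mathrm{Uniform}\{1,2\}$ sequence whose partial counts agree with the realized $n_{t,a}$ while racing is ongoing (the algorithm's uniform coin flips are drawn from a source independent of the rewards); a standard Chernoff tail then bounds these counts uniformly in $t$. For very small $t$, where the Chernoff bound is vacuous, both target bounds exceed $1$ and hold trivially.
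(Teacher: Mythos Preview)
Your proof is correct and follows essentially the same route as the paper: condition on a high-probability ``clean'' event (confidence intervals valid and arm counts concentrated), use the overlap of confidence intervals during racing to bound $\Delta \leq O(\sqrt{\log T/t})$, invert to get $\tau = O(\Delta^{-2}\log T)$, and read off both regret bounds from $R(t)\leq \Delta\cdot\min(t,\tau)$. The only cosmetic differences are that the paper handles count concentration via Azuma--Hoeffding on $|n_{t,a}-t/2|$ rather than your Chernoff-on-Binomial coupling, and it does not explicitly spell out the stopping-time subtlety or the $\min(x,y)\leq\sqrt{xy}$ step---your treatment of these is, if anything, more careful.
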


\newcommand{\raw}{{\term{raw}}}

\begin{proof}
The Azuma-Hoeffding inequality mentioned above also implies that
\begin{align}\label{eq:clean-counts}
\abs{n_{t,a}-t/2} \leq O\rbr{\sqrt{t\,\log T}}
\quad\forall\;\text{arms $a$, rounds $t$}
\end{align}
with probability at least $1-2\,T^{-4}$. It suffices to perform the analysis conditional on the high-probability event that \eqref{eq:clean} and \eqref{eq:clean-counts} hold. We are interested in the ``raw" expressions
    $R^\raw(t) := t\cdot \max(\mu_1,\mu_2)-\sum_{s\leq t} r_s$ and
    $\IReg^\raw(t) := \max(\mu_1,\mu_2) - r_t$
whose expectations give, resp., $R(t)$ and $\IReg(t)$.

Let $\tau$ be the last round in which we did \emph{not} invoke the stopping rule. In all rounds $t\leq \tau$ the arms' confidence intervals overlap, so
\begin{align}\label{eq:clean-gap}
 \Delta
    \leq 2\rbr{ \rad_{t,a} + \rad_{t,a'} }
    = O\rbr{\sqrt{t^{-1}\;\cdot \log T }}
\quad \forall\; \text{rounds $t\leq \tau$}.
\end{align}
The best arm is chosen in all rounds $t>\tau$, so
$\IReg^\raw(t) \leq O\rbr{\sqrt{t^{-1}\;\cdot \log T }}$
and
    \[ R^\raw(t) \leq \Delta\cdot \min(t,\tau) \leq O\rbr{\sqrt{t\,\cdot \log T }}.\]
To obtain the gap-dependent regret bound, we observe that \refeq{eq:clean-gap} implies
    $\tau \leq O(\Delta^{-2}\;\log T)$,
which in turn implies
    $R^\raw(t) \leq \Delta\cdot \tau \leq O(\Delta^{-1}\;\log T)$.
\end{proof}

\begin{remark}
The significance of regret bounds in bandits typically focuses on the ``main terms" in the regret bounds, ignoring the constants and the logarithmic terms (unless the logarithmic term \emph{is} the main term). Thus, we obtain $R(t) \sim \sqrt{t}$ and $\IReg(t) \sim 1/\sqrt{t}$ in the worst case. Moreover, we obtain $R(T)\sim \log(T)$ when the \emph{gap} $\Delta = |\mu_1-\mu_2|$ is constant, with a multiplier that scales as $1/\Delta$. All $\log(T)$ terms can be replaced with $\log(t)$ via a slightly more involved analysis.
\end{remark}

Regret bounds in Theorem~\ref{thm:adaptiveRace} are optimal, as per the lower bounds presented below (without a proof). These lower bounds are rather subtle to state, and come in two flavors. First, a given upper bound cannot be improved \emph{in the worst case}, \ie for any algorithm there exists a problem instance that fools this algorithm. In fact, there is a pair of problem instances one of which fools \emph{every} algorithm. Second, the logarithmic regret bound is optimal \emph{for any given problem instance}, provided that the algorithm is at least somewhat good overall.%
\footnote{Such assumption is needed to rule out trivial solutions, \eg an algorithm that always plays arm $1$ is optimal for all instances with best arm $1$.}

\begin{theorem}\label{thm:bandit-LBs}
Focus on Bernoulli rewards. Fix any bandit algorithm \ALG.
\begin{itemize}
\item[(a)] Fix time horizon $T$ and $\eps > c/\sqrt{T}$, for a large enough absolute constant $c$. Consider problem instances
        $\mu = (\nicefrac{1}{2},\,\nicefrac{1}{2}+\eps)$
    and
        $\mu = (\nicefrac{1}{2},\,\nicefrac{1}{2}-\eps)$.
    Then on (at least) one of these instances $\ALG$ suffers regret
    $R(T)\geq \Omega(\nicefrac{1}{\eps})$ and $\IReg(T)\geq \Omega(\eps)$.

\item[(b)] Consider $T=\infty$. Suppose for each problem instance and each $\alpha>0$ there exists a constant $C$ such that $R(t) \leq C\,t^\alpha$ for all rounds $t\in\N$. Then for any given problem instance there exists time $t_0$ such that
        $R(t) \geq \frac{\mu^*(1-\mu^*)}{\Delta} \log t$
    for all rounds $t>t_0$, where $\mu^* = \max(\mu_1,\mu_2)$ and $\Delta = |\mu_1-\mu_2|$.

\end{itemize}
\end{theorem}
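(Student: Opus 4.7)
I would use the two-point (Le Cam) method. Write $\mathbb{P}_+,\mathbb{P}_-$ for the laws of the algorithm's observation history on the two instances $I_+,I_-$, and let $N_2$ be the (random) number of pulls of arm~$2$ within $T$ rounds. Regret decomposes as $R(T)_{I_-} = \eps \cdot \E_-[N_2]$ and $R(T)_{I_+} = \eps \cdot \E_+[T - N_2]$, so it suffices to show $\E_-[N_2] + \E_+[T - N_2] = \Omega(T)$: the hypothesis $\eps \geq c/\sqrt{T}$ converts this into the target $R(T) = \Omega(1/\eps)$ on at least one instance. For instantaneous regret, $\IReg(T)_{I_-} = \eps\,\Pr_-[a_T{=}2]$ and $\IReg(T)_{I_+} = \eps\,\Pr_+[a_T{=}1]$, so one needs $\Pr_-[a_T{=}2] + \Pr_+[a_T{=}1] = \Omega(1)$.

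Both reductions follow from the generic bound $\Pr_-[E] + \Pr_+[E^c] \geq 1 - d_{TV}(\mathbb{P}_+,\mathbb{P}_-)$ applied respectively to the events $E = \{N_2 \geq T/2\}$ and $E = \{a_T = 2\}$. The heart of the argument is then controlling total variation. The KL chain rule (divergence decomposition) gives $D_{\mathrm{KL}}(\mathbb{P}_-,\mathbb{P}_+) = \E_-[N_2]\cdot \mathrm{kl}(\nicefrac12 - \eps,\,\nicefrac12 + \eps) = O(\eps^2\,\E_-[N_2])$, and Pinsker yields $d_{TV} \leq O(\eps\sqrt{\E_-[N_2]})$. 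Split into cases: if $\E_-[N_2] > c'/\eps^2$ for a small enough constant $c'$, then already $R(T)_{I_-} = \Omega(1/\eps)$; otherwise $d_{TV}$ is a small constant and both reductions close out on $I_+$.

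\textbf{Part (b).} This is the classical Lai--Robbins lower bound. Assume arm~$1$ is optimal, fix a small $\delta > 0$, and consider the alternative Bernoulli instance $I' = (\mu_1,\,\mu_1 + \delta)$ on which arm~$2$ becomes uniquely optimal. Let $N_2(t)$ be the pulls of arm~$2$ through round~$t$; since $R(t) \geq \Delta\cdot\E[N_2(t)]$, it suffices to lower-bound $\E[N_2(t)]$. The consistency hypothesis, applied to \emph{both} the original instance and $I'$, forces the algorithm to make opposite decisions under the two laws: on the original, $\E[N_2(t)] = o(t^\alpha)$ (else regret grows faster than $t^\alpha$); on $I'$, $\E_{I'}[t - N_2(t)] = o(t^\alpha)$ symmetrically.

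Quantifying this incompatibility is the main obstacle, and I would invoke a change-of-measure inequality in the Kaufmann--Capp\'e--Garivier ``fundamental inequality'' style: applied to the distinguishing event $E = \{N_2(t) \geq t/2\}$ it yields
\[
\E[N_2(t)] \cdot \mathrm{kl}(\mu_2,\,\mu_1+\delta) \;\geq\; \mathrm{kl}\bigl(\Pr[E],\,\Pr_{I'}[E]\bigr) \;\geq\; (1 - o(1))\log t,
\]
where the last step uses that $\Pr[E]$ and $1 - \Pr_{I'}[E]$ are both $o(t^{\alpha - 1})$ by Markov plus consistency, so the asymptotic Bernoulli-KL estimate $\mathrm{kl}(p,q) \geq (1-p)\log\frac{1}{1-q} - O(1)$ kicks in. Sending $\delta \to 0$ and using the Taylor expansion $\mathrm{kl}(\mu^* - \Delta,\,\mu^*) = \Delta^2/(2\mu^*(1-\mu^*)) + O(\Delta^3)$ gives $\E[N_2(t)] \geq (1 - o(1))\cdot 2\mu^*(1-\mu^*)\log(t)/\Delta^2$, so $R(t) \geq (1 - o(1))\cdot 2\mu^*(1-\mu^*)\log(t)/\Delta$, comfortably exceeding the theorem's claimed $\mu^*(1-\mu^*)\log(t)/\Delta$ for all $t > t_0$.
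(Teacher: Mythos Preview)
The paper does not prove this theorem: it is explicitly stated ``without a proof'' and attributed to \citet{Lai-Robbins-85,bandits-exp3} in the literature review. So there is no in-paper argument to compare against.

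Your proposal follows the standard proofs from those references and is essentially correct. For part~(a), the Le~Cam two-point method with the KL chain rule (divergence decomposition) and Pinsker is exactly the textbook route; the case split on $\E_-[N_2]$ cleanly handles the interaction between total variation and regret. For part~(b), the change-of-measure argument via the data-processing inequality in the style of Kaufmann--Capp\'e--Garivier is the modern streamlined version of Lai--Robbins, and your choice of alternative instance and distinguishing event is correct.

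One small technical point in part~(b): the Taylor expansion $\mathrm{kl}(\mu^*-\Delta,\mu^*) = \Delta^2/(2\mu^*(1-\mu^*)) + O(\Delta^3)$ is only valid for small $\Delta$, whereas the theorem is stated for a \emph{fixed} problem instance with arbitrary $\Delta$. The clean fix is to replace Taylor by the chi-square upper bound $\mathrm{kl}(p,q) \leq (p-q)^2/(q(1-q))$, which holds for all $p,q\in(0,1)$ and gives $\mathrm{kl}(\mu^*-\Delta,\mu^*) \leq \Delta^2/(\mu^*(1-\mu^*))$ exactly. This yields $R(t) \geq (1-o(1))\,\mu^*(1-\mu^*)\log(t)/\Delta$, matching the theorem's constant on the nose; you lose your extra factor of~2, but that factor was not rigorously established anyway.
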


Note that \term{AdaptiveRace} adapts its exploration schedule to the observations. This property, called \emph{adaptive exploration}, is necessary to achieve the regret bounds listed in Theorem~\ref{thm:adaptiveRace}. Otherwise one can only achieve regret
    $R(T) = \tildeO(T^{2/3})$
and
    $\IReg(t) = \tildeO(t^{-1/3})$.
Two paradigmatic \emph{non}-adaptive algorithms are as follows. \term{ExploreFirst} samples each arm for a predetermined number $N$ of rounds, then chooses an arm with a larger average reward and plays it forever after. A more robust version called \term{EpsilonGreedy} spreads exploration uniformly: in each round $t$, with probability $\eps_t$ the algorithm explores by choosing an arm uniformly at random, and otherwise it exploits by choosing an arm with a larger average reward. One can achieve
    $R(T) = \tildeO(T^{2/3})$
by setting, resp., $N = T^{2/3}$ and $\eps_t = t^{-1/3}$. The proof uses the same technique as Theorem~\ref{thm:adaptiveRace}, we leave it as an exercise.

There are three techniques that attain optimal regret bounds, and extend far beyond the basic case of two-armed bandits.
\term{SuccessiveElimination}, of which \term{AdaptiveRace} is a special case, eliminates an arm from consideration once it appears worse than some other arm with high confidence. \term{UCB1} always chooses an arm with the best upper confidence bound. \term{ThompsonSampling} in each round forms a posterior distribution $\mP'$ on $\mu$, samples a mean reward vector $\mu'\sim \mP'$, and chooses the best arm according to $\mu'$. The first two algorithms are easier to analyze  (focusing on a high-prob event, like in Theorem~\ref{thm:adaptiveRace}), whereas \term{ThompsonSampling} attains the $R(t) = O(\frac{1}{\Delta}\,\log t)$ regret bound with an optimal constant factor.

% in a slightly stronger version of Theorem~\ref{thm:bandit-LBs}(b).

\section{Connection to Bayesian persuasion}
\label{sec:BP}

A single round of incentivized exploration can be seen as a standalone game between the principal and the agent. More precisely, let us focus on the pure exploration objective (sample arm $2$), and consider some round $t>1$. The principal observes what happened in the previous rounds (the \emph{history}) and chooses a message $\msg\in \msgSpace$. The agent observes the message, chooses an arm $\armBP\in\{1,2\}$ and collects expected reward $\mu_{\armBP}$.  Incentives are misaligned: the agent prefers an arm with a larger reward, whereas the principal prefers arm $2$ no matter what. Thus, we can model the principal's reward as $u_{\armBP}$, where the reward vector is $u = (0,1)$. As the principal's messaging policy is fixed and known, and  the past agents' behavior is specified by \refeq{eq:agents-IE}, a joint distribution over $\mu$ and the histories is well-defined and known to both the agent and the principal. The agent's choice simplifies to
\begin{align}\label{eq:agents-BP}
\armBP = \argmaxARMS{ \E\sbr{ \mu_a \mid \msg} }.
\end{align}

To make this game more generic, let us posit that the principal observes an initial signal $\iSig$ from some universe $\sigSpace$ of possible signals, and receives reward $u_{\armBP}$ from some reward vector $u\in [0,1]^2$. The triple
    $(\iSig,\mu,u)$
is drawn from some joint prior $\mP$ which is known to both the principal and the agent. During the game, the principal chooses the message $\msg\in \msgSpace$ given the initial signal $\iSig$ according to some (possibly randomized) rule called the \emph{messaging policy}. Before the game, the principal chooses the messaging policy and the universe $\msgSpace$ (as a superset of all possible messages), so as to maximize its expected reward $\E[u_{\armBP}]$.
The messaging policy is known to the agent (which is needed to make \eqref{eq:agents-BP} well-defined).

This single-round game, called \emph{Bayesian persuasion}, \index{Bayesian persuasion} is well-studied in theoretical economics as a simple (yet quite rich) model of persuasion via strategic communication. The principal can neither choose the arms nor modify their payoffs directly. Instead, the principal leverages the initial signal and
 %must leverage information asymmetry
%--- the fact that it has more information than the agent ---
strategically chooses which information to reveal to the agent.

\begin{BoxedProblem}{Bayesian persuasion}
\begin{enumerate}
\item  Initial signal $\iSig\in\sigSpace$ and reward vectors $\mu,u\in[0,1]^2$ are drawn:\\ triple $(\iSig,\mu,u)$ is drawn from Bayesian prior $\mP$.
  \item Principal observes $\iSig\in\sigSpace$,
    then computes message $\msg \in \msgSpace$.
  \item Agent observes $\msg$, then chooses arm
    $\armBP \in \{1,2\}$ as per \refeq{eq:agents-BP}.
  \item Agent receives expected reward $\mu_{\armBP}$, principal receives reward $u_{\armBP}$.
\end{enumerate}
\vspace{-2mm}
\end{BoxedProblem}

Motivating examples come from a variety of domains, in addition to recommendation systems. In a criminal investigation, the evidence can be seen as a ``message" from the prosecutor to the judge, and the prosecutor can strategically choose which evidence to request or seek out, so as to maximize the probability of conviction. Product information can be seen as a ``message" from the producer to potential consumers, and the producer may be able to strategically choose which tests to perform, which types of statistics to report, or what to offer in a free trial. Several examples concern grading policies: here, a grade is interpreted as a ``message" from the grader to the world. To wit, a student's grade is a ``message" from the school to potential employers; employee's performance feedback is a ``message" from the employer to the employee; a person's credit score is a ``message" from the credit agency to potential lenders. In all these cases, a grading policy can be chosen so as to achieve desirable social effects. Finally, the government chooses which information must be disclosed to the public, and can choose disclosure policies anticipating the strategic response thereto. For example, this concerns disclosing health risks (\eg those from environment pollution, infectious diseases, or vaccines), product data (esp. about foods and medicines), and some aspects of law enforcement strategies (\eg in policing, tax fraud detection, or wildlife protection).

The basic formulation defined above can be extended in several directions. For example, there can be multiple senders, the sender(s) may observe their own private signals, and messaging policies may be restricted to have a particular shape (\eg grading policies must assign same or higher grade for better achievements).

%%%%%%%%%%%
\subsection{Optimal persuasion for a special case}
\label{sec:persuasion-basic}

% high/low notation
\newcommand{\tH}{{\term{H}}}
\newcommand{\tL}{{\term{L}}}

% high/low values for mu1
\newcommand{\muH}{v_\tH}
\newcommand{\muL}{v_\tL}

\newcommand{\mean}{\term{mean}} % mean belief
\newcommand{\vb}{{\bf b}} % vec{b}

We work out a special case of Bayesian persuasion, which is \emph{also} a special case of incentivized exploration (with deterministic rewards and $T=2$ rounds). We obtain an optimal solution for this special case, whereas general solutions for incentivized exploration are only approximate. On the technical level, we showcase a fundamental technique from Bayesian persuasion, whereby one directly optimizes over the agent's posterior beliefs rather than over the principal's messaging policies.

We consider Bayesian persuasion with the pure exploration objective:
\begin{align}\label{eq:BP-sup-raw}
\sup\cbr{ \Pr\sbr{\armBP=2}:\;\; \text{messaging policies $\pi$} }.
\end{align}
We assume that the principal has full knowledge of arm $1$, \ie that $\iSig=\mu_1$. This is equivalent to incentivized exploration with deterministic rewards, $T=2$ rounds, and the pure exploration objective. Indeed, then arm $1$ is chosen in round $t=1$, a messaging policy is only used in round $t=2$, and its input is precisely $\mu_1$.

% and in round $t=2$ the principal observes $\mu_1$  and maximizes the probability of choosing arm $2$ in round $t=2$.

%Recall that the principal must commit to a messaging policy before the signal is revealed.

To make the problem more tractable, we also posit independent priors and $\mu_1$ having only two possible values: $\mu_1\in \{\muL,\muH\}$ with
    $0\leq \muL<\muH\leq 1$.
Since the problem is trivial if $\mu_2^0\not\in (\muL,\muH) $, we focus on the case when
    $\muL \leq  \mu_2^0 \leq \mu_1^0 \leq \muH$.
We make these assumptions without further notice in the rest of this subsection.

%We express the optimal solution in terms of the parameters.

We solve this problem as follows:

\begin{theorem}\label{thm:BP}
The supremum in \refeq{eq:BP-sup-raw} equals
    $(\muH-\mu_1^0)/(\muH-\mu_2^0)$.
\end{theorem}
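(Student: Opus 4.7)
The plan is to follow the ``optimize over posteriors'' technique foreshadowed in the section preamble: reformulate the problem as a one-dimensional optimization over the agent's posterior mean for $\mu_1$, and then solve it in closed form.

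\emph{Step 1 (reduce to direct recommendations).} First I would invoke a revelation-principle-style argument: without loss of generality $\msgSpace = \{1,2\}$, with $\msg = a$ interpreted as ``recommend arm $a$.'' Indeed, given any messaging policy, binning together all messages that induce the same agent action preserves both $\Pr[\armBP=2]$ and the IC constraint \refeq{eq:agents-BP}, by Fact~\ref{BIC:fact:iterated-expectations}. Since $\iSig = \mu_1$ is independent of $\mu_2$, every such message leaves $\E[\mu_2\mid \msg]=\mu_2^0$. Hence the agent follows $\msg = 2$ iff $\E[\mu_1\mid \msg=2]<\mu_2^0$ (strict, because ties favor arm $1$), and the objective becomes maximizing $\Pr[\msg=2]$ subject to this single IC constraint.

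\emph{Step 2 (reformulate via the posterior mean).} Let $q := \E[\mu_1\mid \msg] \in [\muL,\muH]$ be the agent's posterior mean for arm $1$, viewed as a random variable induced by the policy. The law of iterated expectation gives the Bayes-plausibility constraint $\E[q]=\mu_1^0$. Conversely, because $\mu_1$ has two-point support, any distribution of $q$ on $[\muL,\muH]$ with mean $\mu_1^0$ is realized by some policy: a binary-recommendation policy is fully parameterized by $\alpha_L := \Pr[\msg=2\mid\mu_1=\muL]$ and $\alpha_H := \Pr[\msg=2\mid\mu_1=\muH]$, which induce a two-point distribution of $q$ whose values and weights can be tuned to hit any plausibility-consistent target. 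The problem thus reduces to
\[
\sup\cbr{\Pr[q<\mu_2^0]\;:\; q\in[\muL,\muH]\text{ a.s.},\; \E[q]=\mu_1^0}.
\]

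\emph{Step 3 (solve the one-dimensional problem).} For the upper bound, set $x := \Pr[q<\mu_2^0]$ and dominate each piece of $\E[q]$ by its pointwise maximum on the corresponding half-interval:
\[
\mu_1^0 = \E[q] \;\leq\; x\cdot\mu_2^0 + (1-x)\cdot\muH,
\]
which rearranges (using $\muH>\mu_2^0$) to $x \leq (\muH-\mu_1^0)/(\muH-\mu_2^0)$. For matching achievability I would exhibit, for each $\eps>0$, the two-point policy with $\alpha_L=1$ and $\alpha_H\in(0,1)$ chosen to enforce $\E[\mu_1\mid\msg=2]=\mu_2^0-\eps$; this places $q$ on $\{\mu_2^0-\eps,\muH\}$ with mean $\mu_1^0$ and gives $\Pr[\msg=2]\to (\muH-\mu_1^0)/(\muH-\mu_2^0)$ as $\eps\to 0^+$. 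The main obstacle is the ``$\Leftarrow$'' direction of the equivalence in Step 2 — realizing an arbitrary Bayes-plausible distribution of $q$ by an honest messaging policy — but that is precisely what the explicit $(\alpha_L,\alpha_H)$ construction delivers; the rest is the standard concavification calculation.
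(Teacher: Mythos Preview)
Your proposal is correct and follows essentially the same approach as the paper: reduce to two messages via a revelation argument, optimize over Bayes-plausible posterior beliefs, and solve by pushing the ``recommend 2'' posterior up to $\mu_2^0$ and the ``recommend 1'' posterior to $\muH$. The only cosmetic difference is that the paper parameterizes beliefs by $b=\Pr[\mu_1=\muH\mid\msg]$ and then passes to $\mean(b)$, whereas you work directly with the posterior mean $q=\E[\mu_1\mid\msg]$; your one-line upper bound $\mu_1^0\le x\,\mu_2^0+(1-x)\,\muH$ is a slightly cleaner version of the paper's Lemma~\ref{lm:BP-sup} computation, but the content is identical.
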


%Note that the right-hand side does not depend on the lower value $\muL$, even though it can be arbitrary given the upper value $\muH$ and the mean $\mu_1^0$.

In particular, we achieve a larger $\Pr\sbr{\armBP=2}$ compared to full revelation. Indeed, under full revelation ($\msg = \mu_1$) the agent chooses arm $2$ if and only if $\mu_1<\mu_2^0$, and one can calculate that
    $\Pr[\mu_1<\mu_2^0] = (\muH-\mu_1^0)/(\muH-\muL)$.

Next, we prove Theorem~\ref{thm:BP}. We assume w.l.o.g. that there are only two possible messages. This follows from Claim~\ref{cl:revelation}, a general fact proved in the next section.
% we can assume w.l.o.g. that the messaging policy is in fact a BIC recommendation policy; we only use the implication that there are only two possible messages.

We are interested in the posterior distribution on $\mu_1$ formed by the agent after observing the message $\msg$; we refer to this distribution as a \emph{belief}. Thus, a belief is a distribution on $\{\muL,\muH\}$ determined by the messaging policy $\pi$ and the realized message. To simplify notation, we identify a belief with probability it assigns to $\muH$; we call this probability the \emph{scalar belief}. We write
    \[ B^\pi  := \Pr\sbr{ \mu_1 = \muH \mid \msg } \in [0,1]. \]
Note that $B^\pi$ is a random variable on $[0,1]$ whose realization is the scalar belief. We interpret $B^\pi$ as a distribution over beliefs induced by policy $\pi$. Since there are only two possible messages, $B^\pi$ has support of size at most $2$.

Any distribution $B^\pi$ is consistent with the prior, in the sense that
    $\E[B^\pi] = \Pr[\mu_1 = \muH]$.
(To prove this, apply Fact~\ref{BIC:fact:iterated-expectations} to the indicator function of $\{\mu_1 = \muH\}$.).
A distribution over scalar beliefs with this property is called \emph{Bayes-plausible}.

Let $\mB$ be the set of all Bayes-plausible distributions with support size at most $2$.
In fact, any distribution $B\in\mB$ can be realized by some messaging policy.

\begin{claim}\label{cl:BG-sufficient}
Any distribution $B\in\mB$ equals $B^\pi$ for some messaging policy $\pi$.
\end{claim}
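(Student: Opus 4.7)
The plan is, given $B \in \mB$, to construct an explicit messaging policy $\pi$ over the two-symbol alphabet $\msgSpace = \{m_1, m_2\}$ with $B^\pi = B$. Since the policy's input is $\iSig = \mu_1 \in \{\muL, \muH\}$, such a (possibly randomized) policy is fully determined by two numbers in $[0,1]$: I set $x_H := \pi(m_1 \mid \muH)$ and $x_L := \pi(m_1 \mid \muL)$, with $\pi(m_2 \mid \cdot) = 1 - \pi(m_1 \mid \cdot)$. The key idea is to invert Bayes' rule and choose $x_H, x_L$ so that the induced posteriors exactly reproduce the two atoms of $B$.

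Write $B$ as an atom of mass $q_i$ at scalar belief $b_i \in [0,1]$ for $i = 1, 2$, and let $p := \Pr[\mu_1 = \muH]$. Bayes-plausibility $\E[B] = p$ then reads $q_1 b_1 + q_2 b_2 = p$. Assuming temporarily that $p \in (0,1)$, I would define
\[
x_H := \frac{b_1\, q_1}{p}, \qquad x_L := \frac{(1-b_1)\, q_1}{1-p}.
\]

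The first thing to check is that $x_H, x_L \in [0,1]$. Nonnegativity is immediate, and the upper bounds use Bayes-plausibility: $b_1 q_1 \leq b_1 q_1 + b_2 q_2 = p$ gives $x_H \leq 1$, and similarly $(1-b_1) q_1 \leq (1-b_1) q_1 + (1-b_2) q_2 = 1-p$ gives $x_L \leq 1$. The second thing to check is $B^\pi = B$. A direct computation yields $\Pr[\msg = m_1] = x_H\, p + x_L (1-p) = q_1$, and then Bayes' rule gives $\Pr[\mu_1 = \muH \mid \msg = m_1] = x_H\, p / q_1 = b_1$. The symmetric computation for $m_2$ shows $\Pr[\msg = m_2] = q_2$ and $\Pr[\mu_1 = \muH \mid \msg = m_2] = (p - b_1 q_1)/q_2 = b_2$. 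Hence $B^\pi$ puts mass $q_i$ on $b_i$, i.e., $B^\pi = B$.

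The only possible obstacle is the degenerate case $p \in \{0,1\}$, where the formulas above divide by zero. But then Bayes-plausibility $q_1 b_1 + q_2 b_2 = p$ forces every atom of $B$ to sit at $p$, so $B$ is a point mass at $p$ and any constant messaging policy realizes it trivially. Thus the construction covers all $B \in \mB$, completing the plan.
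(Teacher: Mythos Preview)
Your proof is correct and takes essentially the same approach as the paper: both construct the messaging policy by the identical Bayes-inversion formula $\Pr[\msg = j \mid \mu_1 = v] = \vb_j(v)\,B(\vb_j)/\Pr[\mu_1 = v]$, which in your notation is exactly $x_H = b_1 q_1/p$ and $x_L = (1-b_1)q_1/(1-p)$. Your write-up is in fact more complete, since you verify that $x_H, x_L \in [0,1]$ via Bayes-plausibility, check the induced posteriors explicitly, and treat the degenerate case $p \in \{0,1\}$, none of which the paper spells out.
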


\begin{proof}
We specify the policy $\pi$ explicitly. This necessitates additional notation (only for this proof). Fix distribution $B\in\mB$ with over scalar beliefs $\{b_1,b_2\}$. For each $j\in\{1,2\}$, let $\vb_j$ be the belief that corresponds to $b_j$, \ie a distribution over
    $V = \{\muL,\muH\}$
such that
    $\vb_j(\muH) = b_j$ and $\vb_j(\muL) = 1-b_j$.
Let $B(\vb_j)$ be the probability assigned to this belief by distribution $B$.

Now we are ready to specify the messaging policy $\pi$. It has two possible messages:  $\msgSpace = \{1,2\}$. For each $v\in V$ and $j\in\{1,2\}$,
\[ \Pr\sbr{ \msg=j \mid \mu_1 = v}
    = \vb_j(v)\; B(\vb_j)/\Pr[\mu_1=v].
    \qedhere
\]
\end{proof}

Henceforth we optimize directly over $\mB$, rather than over the messaging policies.

Let us spell out this maximization problem in more precise terms. Given any scalar belief $b\in[0,1]$, let
    $\mean(b) := b\,\muH + (1-b)\,\muL $
denote the corresponding expectation, \ie mean reward of arm $1$; we call it the  \emph{mean belief}. The agent chooses arm $2$ if and only if
    $\mean(b) <\mu_2^0$.
For a given messaging policy $\pi$, the agent chooses arm $2$ with probability
    $\Pr\sbr{ \mean(B^\pi)<\mu_2^0 }$.
Thus, the maximization problem is
\begin{align}\label{eq:BP-sup}
    \sup_{B\in \mB}\; \Pr\sbr{ \mean(B)<\mu_2^0 }.
\end{align}

\begin{lemma}\label{lm:BP-sup}
The supremum in \eqref{eq:BP-sup} equals $(\muH-\mu_1^0)/(\muH-\mu_2^0)$.
\end{lemma}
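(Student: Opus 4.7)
The plan is to recast Lemma~\ref{lm:BP-sup} as an elementary two-variable optimisation on the two support points of $B$, and solve it by monotonicity.

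First I would reparameterise so that the key thresholds become linear. Let $b^0 := (\mu_1^0-\muL)/(\muH-\muL) = \Pr[\mu_1=\muH]$ and $b^* := (\mu_2^0-\muL)/(\muH-\muL)$, so that $\mean(b)<\mu_2^0$ iff $b<b^*$, and Bayes-plausibility becomes $\E[B]=b^0$. The assumption $\muL\le\mu_2^0\le\mu_1^0\le\muH$ gives $0\le b^*\le b^0\le 1$, and the target formula $(\muH-\mu_1^0)/(\muH-\mu_2^0)$ simplifies to $(1-b^0)/(1-b^*)$ by a one-line algebraic check.

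Next, since any $B\in\mB$ is supported on at most two points $b_1\le b_2$ with $\E[B]=b^0$, Bayes-plausibility forces $b_1\le b^0\le b_2$. I would split on the position of the support relative to $b^*$: if $b_1\ge b^*$ the objective is zero, while $b_2 < b^*$ contradicts $b^* \le b^0 \le b_2$; so the only interesting case is $b_1<b^*\le b_2$, in which, writing $p:=\Pr[B=b_1]$, the constraint $pb_1+(1-p)b_2=b^0$ yields $\Pr[B<b^*] = p = (b_2-b^0)/(b_2-b_1)$. For the upper bound I would then observe that on the feasible rectangle $b_1\in[0,b^*)$, $b_2\in[b^0,1]$ one has $\partial p/\partial b_1 = (b_2-b^0)/(b_2-b_1)^2\ge 0$ and $\partial p/\partial b_2 = (b^0-b_1)/(b_2-b_1)^2\ge 0$, so the supremum is approached at the corner $b_1\nearrow b^*$, $b_2=1$, giving the value $(1-b^0)/(1-b^*)$.

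Achievability comes essentially for free by running the construction above in reverse: for each small $\eps\in(0,b^*)$, the two-point distribution with atoms $b^*-\eps$ and $1$ is Bayes-plausible (hence in $\mB$) and realises $\Pr[B<b^*] = (1-b^0)/(1-b^*+\eps)$, which tends to $(1-b^0)/(1-b^*)$ as $\eps\to 0$. The only subtle point, and the one step that would require care in the final writeup, is the strict inequality $\mean(b)<\mu_2^0$ forced by the tie-breaking rule in favour of arm~1: setting $b_1=b^*$ exactly would leave the agent indifferent and send him back to arm~1, so the optimum is a genuine supremum rather than a maximum and must be approached in a limit. Once these two bounds are combined, Claim~\ref{cl:BG-sufficient} lets one pull the result back from $\mB$ to messaging policies, proving the lemma.
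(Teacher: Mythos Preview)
Your proposal is correct and follows essentially the same approach as the paper: reduce to a two-point support, identify the only nontrivial configuration ($b_1$ below the threshold, $b_2$ above), and push $b_1$ up to the threshold and $b_2$ up to $1$. Your reparameterisation to $b^0,b^*$ and the explicit monotonicity via partial derivatives make the optimisation step cleaner than the paper's somewhat terse ``this is maximized when\ldots'', but the structure of the argument is the same.
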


\begin{proof}
Fix some distribution $B\in\mB$ with support
    $\{b_\tL, b_\tH\}$,
where
    $0\leq b_\tL \leq b_\tH \leq 1$,
and respective probabilities $\{B_\tL, B_\tH\}$.
Writing out Bayes-plausibility, we obtain
\begin{align}
B_\tL\, b_\tL + B_\tH\, b_\tH &= \Pr[\mu_1 = \muH],
    \label{eq:lm:BP-sup:plausibility}\\
B_\tL\; \mean(b_\tL) + B_\tH\;\mean(b_\tH)
    &= \mean\rbr{ \Pr[\mu_1 = \muH] } = \mu_1^0.
    \label{eq:lm:BP-sup:plausibility-means}
\end{align}
\refeq{eq:lm:BP-sup:plausibility-means} is obtained by applying $\mean(\cdot)$ to both sides of \refeq{eq:lm:BP-sup:plausibility-means}. The two equations are equivalent since $\mean(\cdot)$ is monotone.
\refeq{eq:lm:BP-sup:plausibility-means} implies that
\[ \muL \leq \mean(b_\tL) \leq  \mu_1^0 \leq \mean(b_\tH)  \leq \muH.\]

Now, if
    $\mean(b_\tL)\geq \mu_2^0$
then
    $\Pr\sbr{ \mean(B)<\mu_2^0 } = 0$.
So, w.l.o.g. we will assume
    $\mean(b_\tL)< \mu_2^0$
from here on. Then
    $\Pr\sbr{ \mean(B)<\mu_2^0 } = B_\tL$.
Thus, the maximization problem in \refeq{eq:BP-sup} can be rewritten as follows:
\begin{align}\label{eq:lm:BP-sup:equivalent}
    \sup\cbr{ B_\tL:\;\; B\in \mB,\; \mean(b_\tL)< \mu_2^0 }.
\end{align}
Equivalently, we maximize $B_\tL$ subject to two constraints: Bayes-plausibility, as expressed by \refeq{eq:lm:BP-sup:plausibility-means}, and $\mean(b_\tL)< \mu_2^0$. This is maximized when
    $\mean(b_\tL)\to \mu_2^0$
and
    $\mean(b_\tH)= \muH$.
Plugging this into \eqref{eq:lm:BP-sup:plausibility-means}, we obtain
    $B_\tL \to (\muH-\mu_1^0)/(\muH-\mu_2^0)$.
\end{proof}

This completes the proof of Theorem~\ref{thm:BP}. The supremum in \eqref{eq:BP-sup-raw} and \eqref{eq:BP-sup} is not attained, essentially because the tie-breaking in \eqref{eq:agents-BP} favors arm $1$. If the tie-breaking favored arm $2$ instead, the supremum would be attained by setting $\mean(b_\tL)= \mu_2^0$ in the proof of
Lemma~\ref{lm:BP-sup}.

%Interestingly, the same math captures an idealized version of the prosecutor-judge example in which the prosecutor can select an arbitrary message to the judge. The prosecutor learns whether the defendant is guilty or innocent

\section{How much information to reveal?}
\label{sec:revelation}

How much information should the principal reveal to the agents? Consider two extremes: revealing the entire history and recommending an arm without providing any supporting information. The former does not work, and the latter suffices.

\xhdr{Recommendations.} We transition from arbitrary messages to recommendations as follows. A \emph{recommendation policy} is a messaging policy whose messages are arms: formally, a message at each round $t$ is $\rec_t\in \{1,2\}$. Given an arbitrary messaging policy, the induced recommendation policy is one in which $\rec_t$ equals the right-hand side of \refeq{eq:agents-IE}. Let us argue that the agents choose recommended actions. Formally, we require Bayesian incentive-compatibility:

\begin{definition}\label{def:BIC}
Let
    $\recE_{t-1} = \{a_s=\rec_s:\, s<t \}$
denote the event that recommendations were followed before round $t$.
A recommendation policy is called \emph{Bayesian incentive-compatible} (\emph{BIC}) if for all rounds $t$ the following property holds:
\begin{align}\label{BIC:eqn:bic-constraint}
\rec_t =  \argmaxARMS{ \E\sbr{ \mu_a \mid \rec_t,\, \recE_{t-1} }}.
\end{align}
%for any arms $a\neq a'$ such that
%    $\Pr[\rec_t=a,\,\recE_{t-1}]>0$.
\end{definition}

\begin{claim}\label{cl:revelation}
Given any messaging policy, the induced recommendation policy is BIC.
\end{claim}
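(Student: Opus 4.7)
The plan is to prove the claim by induction on $t$, reducing the BIC inequality to a weighted average of pointwise inequalities that already hold by the very definition of $\rec_t$. The core tool is the tower property of conditional expectations (Fact~\ref{BIC:fact:iterated-expectations}).

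First I would set up the induction and align the two probabilistic settings. Fix the original messaging policy $\pi$ and let $\pi'$ denote the induced recommendation policy, so that $\rec_t$ equals the right-hand side of \eqref{eq:agents-IE}. By the inductive hypothesis, in all rounds $s<t$ agents chose $a_s = \rec_s$, so under $\pi'$ the event $\recE_{t-1}$ holds with probability one; moreover, the joint distribution of $(\mu, \msg_1,\ldots,\msg_t, \rec_1,\ldots,\rec_t)$ under $\pi'$ coincides with its distribution under $\pi$ when agents use \eqref{eq:agents-IE}. Thus $\recE_{t-1}$ and $\mE_{t-1}$ may be identified as the same probability-one event, and expectations under the two policies agree on the relevant random variables.

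The key observation is that $\rec_t$ is a deterministic function of $\msg_t$, since the argmax in its definition depends only on the realized value of $\msg_t$ (the structure of $\pi$ being common knowledge). Hence for each arm $a$, applying Fact~\ref{BIC:fact:iterated-expectations} with $X=\mu_a$, $Y=\msg_t$, $Z=\rec_t$ gives
$$
\E[\mu_a \mid \rec_t,\, \recE_{t-1}] \;=\; \E\bigl[\, \E[\mu_a \mid \msg_t,\, \mE_{t-1}] \,\bigm|\, \rec_t,\, \recE_{t-1} \,\bigr].
$$
Now fix any realization with $\rec_t = a^*$. By the defining property \eqref{eq:agents-IE} of $\rec_t$, every value of $\msg_t$ in the support of this conditional satisfies $\E[\mu_{a^*} \mid \msg_t, \mE_{t-1}] \geq \E[\mu_a \mid \msg_t, \mE_{t-1}]$ for every arm $a$, with strict inequality whenever $a^*=2$ (because the argmax tie-breaks toward arm $1$). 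Averaging preserves the (weak or strict) direction of the inequality, so $\rec_t$ is the argmax of $\E[\mu_a \mid \rec_t,\, \recE_{t-1}]$ under the same tie-breaking rule, which is exactly \eqref{BIC:eqn:bic-constraint}.

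I expect the main subtlety to be bookkeeping rather than mathematical depth: carefully identifying $\mE_{t-1}$ with $\recE_{t-1}$ so that expectations under $\pi$ and $\pi'$ agree, and verifying that the tie-breaking toward arm~$1$ survives the averaging step (which requires strict pointwise inequality whenever $a^*=2$, rather than merely weak inequality). Once those ingredients are in place, the tower property does all the real work.
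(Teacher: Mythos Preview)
Your proposal is correct and follows essentially the same route as the paper: identify $\recE_{t-1}$ with $\mE_{t-1}$, use the tower property (Fact~\ref{BIC:fact:iterated-expectations}) to pass from conditioning on $\msg_t$ to conditioning on $\rec_t$, and then average the pointwise inequality coming from the definition of $\rec_t$, handling tie-breaking via the strict inequality when $\rec_t=2$. The only cosmetic difference is that you frame the identification of $\recE_{t-1}$ and $\mE_{t-1}$ as an induction on $t$, whereas the paper simply asserts it by construction; either way the substance is the same direct-revelation argument.
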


\begin{proof}
If a recommendation policy is induced by a messaging policy, event $\recE_t$ coincides with event $\mE$ from \refeq{eq:agents-IE}. For each round $t$ and all arms $a\neq a'$, we immediately obtain a version with $\msg_t$ in the conditioning, by definition of $\rec_t$:
\begin{align}
0 &\leq  \E\rbr{ \mu_a-\mu_{a'} \mid \msg_t,\;\rec_t=a,\, \mE_{t-1} }
    \label{eq:cl:revelation-pf}\\
0 &\leq  \E\rbr{ \mu_a-\mu_{a'} \mid \rec_t=a,\, \mE_{t-1} }
    & \EqComment{by \eqref{eq:cl:revelation-pf} and  Fact~\ref{BIC:fact:iterated-expectations}}
    \nonumber
\end{align}
Finally, if we have a strict equality in the latter equation, for some $a\neq a'$, then we also have it in the former equation, in which case $\rec_t = 1$ by \refeq{eq:agents-IE}.
\end{proof}

The above argument follows a well-known technique from theoretical economics called Myerson's \emph{direct revelation principle}. The conclusion is very strong, since it allows us to focus on BIC recommendation policies without loss of generality. However, it relies on assumptions of Bayesian rationality and power to commit that are implicit in our model, see Section~\ref{sec:discussion} for discussion thereof.

We are only interested in BIC recommendation policies from here on. By Eqns.~ \eqref{eq:agents-IE} and \eqref{BIC:eqn:bic-constraint}, all agents comply with recommendations issued by such a policy. Accordingly, it is simply a bandit algorithm with an auxiliary BIC constraint.

\xhdr{Full revelation does not work.} Does the principal need to bother designing and deploying a bandit algorithm? What if the principal just reveals  the full history and lets the agents choose for themselves? Then the agents, being myopic, would follow a ``greedy" bandit algorithm which always ``exploits" and never ``explores".

Formally, suppose that in each round $t$, the message $\msg_t$  includes the history
    $H_t := \{ (a_s,r_s):\, s\in [t-1] \}$.
Posterior mean rewards are determined by $H_t$:
\[
    \E[ \mu_a \mid \msg_t ] = \E[ \mu_a \mid H_t ]
    \quad\text{for each arm $a$}.
\]
(Because $\msg_t$ can only depend on $H_t$ and the algorithm's random seed.) Then
\begin{align}\label{BIC:eq:BG}
     a_t = \argmaxARMS{ \E\sbr{\mu_a \mid H_t} }.
\end{align}
This is (a version of) the greedy bandit algorithm, call it \greedy.

This algorithm performs terribly on a variety of problem instances, suffering Bayesian regret $\Omega(T)$. Whereas bandit algorithms can achieve regret $\tildeO(\sqrt{T})$ on all problem instances, as we saw in Section~\ref{sec:bandits-optimal}. The root cause of this inefficiency is that \greedy may never try arm $2$. For the special case of deterministic rewards, this happens with probability $\Pr[\mu_1\leq \mu_2^0]$, since $\mu_1$ is revealed in round $1$ and arm $2$ is never chosen if $\mu_1\leq \mu_2^0$; we discussed this case as a ``simple example" in Section~\ref{sec:intro}. This result (with a different probability) carries over to the general case.

\begin{theorem}\label{thm:BG}
With probability at least $\mu_1^0-\mu_2^0$, \greedy never chooses arm $2$.
\end{theorem}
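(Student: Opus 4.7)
The plan is to reduce to analyzing a \emph{fictitious process} in which arm $1$ is played in every round, and then apply the optional stopping theorem to a posterior belief-gap martingale. The key idea is that on the event $E$ that \greedy\ never chooses arm $2$ during rounds $1,\ldots,T$, the real greedy trajectory is identical to this fictitious one, so $\Pr[E]$ can be expressed and lower-bounded entirely inside the fictitious probability space.

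First, I would couple the two processes on the same $\mu$ and the same arm-$1$ reward sequence $(r_s)_{s \geq 1}$, let $\tilde H_t := (r_1,\ldots,r_{t-1})$, and define the belief-gap process
\[
M_t \;:=\; \E\sbr{\mu_2 - \mu_1 \mid \tilde H_t} \qquad (t=1,\ldots,T+1).
\]
By Fact~\ref{BIC:fact:iterated-expectations}, $(M_t)$ is a $[-1,1]$-bounded martingale starting at $M_1 = \mu_2^0 - \mu_1^0 \leq 0$. A short induction on $t$, using \eqref{BIC:eq:BG} and the tie-breaking rule favoring arm $1$, then shows that as long as $M_s \leq 0$ for all $s \leq t$, the real history through round $t$ agrees with $\tilde H_{t+1}$, and greedy plays arm $1$ in round $t{+}1$ iff $M_{t+1}\leq 0$. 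Consequently, $E = \{M_t \leq 0 \text{ for all } 1 \leq t \leq T\}$ as an event in the fictitious space.

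Next, I would introduce the bounded stopping time $\tau := \min\{t \geq 1 : M_t > 0\} \wedge (T+1)$, so that $E = \{\tau = T+1\}$. Optional stopping yields $\E\sbr{M_\tau} = M_1 = \mu_2^0 - \mu_1^0$. On $\{\tau \leq T\}$ the stopping condition forces $M_\tau > 0$, while on $E$ we have only the crude bound $M_\tau = M_{T+1} \geq -1$ coming from $\mu_1,\mu_2 \in [0,1]$. Splitting the expectation accordingly,
\[
\mu_2^0 - \mu_1^0 \;=\; \E\sbr{M_\tau} \;\geq\; 0\cdot\Pr\sbr{\tau\leq T} \;-\; \Pr\sbr{E},
\]
which rearranges to the desired $\Pr[E] \geq \mu_1^0 - \mu_2^0$.

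The main obstacle is the reduction step: I must verify that the tie-breaking convention in \eqref{BIC:eq:BG} gives the \emph{weak} inequality $M_t \leq 0$ (not strict) as the condition for greedy to play arm $1$, so that the equivalence $E = \{M_t \leq 0 \text{ for all } t \leq T\}$ is airtight, and that the coupling is well-defined under correlated priors (it is, since conditional on $\mu$ the arm-$1$ reward distribution is fully determined by $\mD_{\mu_1}$). Once this equivalence is in hand, the remainder is a routine application of optional stopping on a bounded martingale, with the only nontrivial estimate being the crude bound $M_{T+1}\geq -1$ used on $E$.
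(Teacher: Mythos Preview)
Your proposal is correct and follows essentially the same approach as the paper: define the posterior belief-gap as a Doob martingale, apply optional stopping at the first time arm~2 would be chosen (or $T{+}1$), and split the expectation using $M_\tau>0$ on $\{\tau\le T\}$ together with the crude bound $|M_\tau|\le 1$ on the complement. The only cosmetic difference is that the paper works directly with the real greedy history $H_t$ (which up to $\tau$ consists solely of arm-$1$ samples anyway) and defines $\tau$ as the first round arm~$2$ is actually chosen, whereas you make the reduction to an ``always arm~$1$'' fictitious process explicit via coupling; the resulting martingale, stopping time, and inequalities are identical up to a sign convention.
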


\begin{proof}
In each round $t$, the key quantity is
    $Z_t = \E[ \mu_1-\mu_2 \mid H_t ]$.
Indeed, arm $2$ is chosen if and only if $Z_t<0$. Let $\tau$ be the first round when \greedy chooses arm $2$, or $T+1$ if this never happens. We use martingale techniques to prove that
\begin{align}\label{eq:thm:BG-Bayes-OST}
\E[Z_\tau] = \mu_1^0-\mu_2^0.
\end{align}

We obtain \refeq{eq:thm:BG-Bayes-OST} via a standard application of the optional stopping theorem (it can be skipped by readers who are not familiar with martingales). We observe that $\tau$ is a stopping time relative to
    $\mH = \rbr{H_t:\, t\in [T+1]}$,
and $\rbr{ Z_t: t\in [T+1]}$ is a martingale relative to $\mH$.
\footnote{The latter follows from a general fact that sequence
    $\E[X\mid H_t]$, $t\in [T+1]$
is a martingale w.r.t. $\mH$ for any random variable $X$ with $\E\sbr{|X|}\infty$. It is known as \emph{Doob martingale} for $X$.}
The optional stopping theorem asserts that $\E[Z_\tau] = \E[Z_1]$ for any martingale $Z_t$ and any bounded stopping time $\tau$. \refeq{eq:thm:BG-Bayes-OST} follows because
    $\E[Z_1] = \mu_1^0-\mu_2^0$.

On the other hand, by Bayes' theorem it holds that
\begin{align}\label{eq:thm:BG-Bayes}
\E[Z_\tau]
    = \Pr[ \tau\leq T ]\,\E[ Z_\tau \mid \tau\leq T ]
        + \Pr[ \tau>T ]\,\E[ Z_\tau \mid \tau>T ]
\end{align}
Recall that $\tau\leq T$ implies that $\greedy$ chooses arm $2$ in round $\tau$, which in turn implies that $Z_\tau \leq 0$ by definition of \greedy. It follows that
    $\E[ Z_\tau \mid \tau\leq T ]\leq 0$.
Plugging this into \refeq{eq:thm:BG-Bayes}, we find that
\[ \mu_1^0-\mu_2^0 = \E[Z_\tau] \leq \Pr[\tau>T].  \]
And $\{\tau>T\}$ is precisely the event that \greedy never tries arm 2.
\end{proof}

Under some mild assumptions, the algorithm never tries arm $2$ \emph{when it is in fact the best arm}, leading to $\Omega(T)$ Bayesian regret.

\begin{corollary}\label{cor:BG}
Consider independent priors such that $\Pr[\mu_1=1]<(\mu_1^0-\mu_2^0)/2$. Pick any $\alpha>0$ such that
    $\Pr[\mu_1\geq 1-2\,\alpha] \leq (\mu_1^0-\mu_2^0)/2$.
Then \greedy suffers Bayesian regret at least
    $T\cdot \rbr{ \nicefrac{\alpha}{2}\;(\mu_1^0-\mu_2^0) \; \Pr[\mu_2>1-\alpha] }$.
\end{corollary}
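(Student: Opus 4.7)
The plan is to leverage Theorem~\ref{thm:BG}, which already tells us that greedy never pulls arm~$2$ with probability $\geq \mu_1^0-\mu_2^0$, and to intersect this with the event that arm~$2$ is substantially better than arm~$1$, thereby converting ``never tried'' into ``$\Omega(T)$ Bayesian regret.''

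Concretely, I would define three events:
\begin{align*}
E_1 &= \{\text{\greedy\ never chooses arm~$2$ in rounds } 1,\dots,T\},\\
E_2 &= \{\mu_2 > 1-\alpha\},\\
E_3 &= \{\mu_1 < 1-2\alpha\}.
\end{align*}
On $E_1 \cap E_2 \cap E_3$, the algorithm plays arm~$1$ in every round, while $\max(\mu_1,\mu_2) - \mu_1 > (1-\alpha)-(1-2\alpha)=\alpha$, so the per-round expected regret is at least $\alpha$ and the total regret is at least $T\alpha$. Thus it suffices to show
\[
\Pr[E_1\cap E_2\cap E_3] \;\geq\; \tfrac{1}{2}\,(\mu_1^0-\mu_2^0)\,\Pr[\mu_2 > 1-\alpha].
\]

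The key observation, and the step I expect to need the most care, is that under independent priors the event $E_1$ is independent of $\mu_2$. The reason is that whenever arm~$2$ has not yet been pulled, no information about $\mu_2$ has reached the history, so by independence $\E[\mu_2 \mid H_t] = \mu_2^0$ is a constant; hence the greedy decision at each time reduces to comparing $\E[\mu_1\mid H_t]$ to the fixed threshold $\mu_2^0$, and the entire indicator of $E_1$ is measurable with respect to $\mu_1$, the i.i.d.\ arm-$1$ rewards, and the algorithm's internal randomness. Since $E_3$ is also $\mu_1$-measurable, the pair $(E_1,E_3)$ is independent of $E_2$, giving $\Pr[E_1\cap E_2\cap E_3] = \Pr[E_1\cap E_3]\cdot\Pr[\mu_2>1-\alpha]$.

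Finally, I would combine Theorem~\ref{thm:BG} ($\Pr[E_1]\geq \mu_1^0-\mu_2^0$) with the hypothesis $\Pr[E_3^c] = \Pr[\mu_1\geq 1-2\alpha] \leq (\mu_1^0-\mu_2^0)/2$ via a union bound to get
\[
\Pr[E_1\cap E_3] \;\geq\; \Pr[E_1]-\Pr[E_3^c] \;\geq\; \tfrac{1}{2}(\mu_1^0-\mu_2^0),
\]
yielding the claimed Bayesian regret bound. (The assumption $\Pr[\mu_1=1] < (\mu_1^0-\mu_2^0)/2$ is used only to guarantee that such an $\alpha>0$ exists, by right-continuity of $\alpha\mapsto \Pr[\mu_1\geq 1-2\alpha]$ at $\alpha=0$.)
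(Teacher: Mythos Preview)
Your proof is correct and follows essentially the same route as the paper. The only cosmetic difference is that the paper bundles your $E_1\cap E_3$ into a single event $\mE_1=\{\mu_1<1-2\alpha\}\cap\{\text{arm 2 never chosen}\}$ and then argues $\Pr[\mE_1]\geq(\mu_1^0-\mu_2^0)/2$ and independence of $\mE_1$ from $\{\mu_2>1-\alpha\}$; your inductive justification that $E_1$ is measurable with respect to $(\mu_1,\text{arm-1 rewards})$ is in fact more explicit than what the paper writes.
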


\begin{proof}
Let $\mE_1$ be the event that $\mu_1<1-2\alpha$ and \greedy never chooses arm $2$. By Theorem~\ref{thm:BG} and the definition of $\alpha$, we have
    $\Pr[\mE_1]\geq (\mu_1^0-\mu_2^0)/2$.

Let $\mE_2$ be the event that $\mu_2>1-\alpha$. Under event $\mE_1\cap \mE_2$, each round contributes
    $\mu_2-\mu_1\geq \alpha$
to regret, so
    $ \E[R(T) \mid \mE_1\cap \mE_2] \geq \alpha\,T$.

Since event $\mE_1$ is determined by the prior on arm $1$ and the rewards of arm $2$, it is independent from $\mE_2$. It follows that
\begin{align*}
\E[R(T)]
    &\geq \E[R(T) \mid \mE_1\cap \mE_2] \cdot \Pr[\mE_1\cap \mE_2] \\
    &\geq \alpha T\cdot (\mu_1^0-\mu_2^0)/2\cdot\Pr[\mE_2]. \qedhere
\end{align*}
\end{proof}

Here's a less quantitative but perhaps cleaner implication:

\begin{corollary}\label{cor:BG-basic}
Consider independent priors. Assume that each arm's prior has a positive density. That is, for each arm $a$, the prior on $\mu_a\in[0,1]$ has probability density function that is strictly positive on $[0,1]$. Then \greedy suffers Bayesian regret at least $c_\mP\cdot T$, where the constant $c_\mP> 0$ depends only on the prior $\mP$.
\end{corollary}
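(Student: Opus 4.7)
The plan is to reduce directly to Corollary \ref{cor:BG}, verifying both of its hypotheses quantitatively from the positive-density assumption and choosing its parameter $\alpha$ small enough (but prior-determined) that the resulting multiplier is strictly positive. Under positive density, $\mu_1$ and $\mu_2$ are continuous random variables supported on $[0,1]$, so in particular $\Pr[\mu_1 = 1] = 0$. I will treat the generic case $\mu_1^0 > \mu_2^0$ first and address the borderline case $\mu_1^0 = \mu_2^0$ at the end.

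For the first hypothesis of Corollary \ref{cor:BG}, namely $\Pr[\mu_1 = 1] < (\mu_1^0 - \mu_2^0)/2$, the above observation gives it for free. For the second hypothesis, define $g(\alpha) := \Pr[\mu_1 \geq 1 - 2\alpha]$, which is nonincreasing in $\alpha$. By continuity of probability measures from above, $\lim_{\alpha \downarrow 0} g(\alpha) = \Pr[\mu_1 = 1] = 0$, so I can pick some $\alpha = \alpha_\mP > 0$, depending only on the prior $\mP$, with $g(\alpha_\mP) \leq (\mu_1^0 - \mu_2^0)/2$. Corollary \ref{cor:BG} then gives a Bayesian regret lower bound for \greedy\ of $T \cdot c_\mP$, where $c_\mP := (\alpha_\mP / 2)(\mu_1^0 - \mu_2^0) \Pr[\mu_2 > 1 - \alpha_\mP]$. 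Every factor is strictly positive: $\alpha_\mP > 0$ by construction, $\mu_1^0 - \mu_2^0 > 0$ by assumption, and $\Pr[\mu_2 > 1 - \alpha_\mP] > 0$ because the density of $\mu_2$ is strictly positive on the subinterval $[1 - \alpha_\mP, 1]$. Each of these quantities is a function of $\mP$ alone, so $c_\mP$ is as well.

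The step I expect to be the main obstacle is the borderline case $\mu_1^0 = \mu_2^0$, where Corollary \ref{cor:BG} collapses to the trivial bound zero. My plan for this case is a direct argument mirroring the proof of Theorem \ref{thm:BG}: the Doob martingale $Z_t = \E[\mu_1 - \mu_2 \mid H_t]$ now starts at $Z_1 = 0$, but under positive density one can show that the prior-dependent event $\{$\greedy\ persists with arm $1$ for all $T$ rounds$\}$ still has strictly positive probability, bounded away from zero by a function of $\mP$; intersecting this with $\{\mu_2 > 1 - \alpha\}$ for a suitable prior-determined $\alpha$, which again has positive density-determined probability independent of $T$, yields the required linear-in-$T$ regret contribution. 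Aside from this boundary technicality, the proof is just the two-step reduction above.
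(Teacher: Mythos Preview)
Your reduction to Corollary~\ref{cor:BG} in the generic case $\mu_1^0>\mu_2^0$ is exactly what the paper does: the paper presents Corollary~\ref{cor:BG-basic} as an immediate ``cleaner implication'' of Corollary~\ref{cor:BG} with no separate proof, and your verification of the hypotheses (using $\Pr[\mu_1=1]=0$ from absolute continuity, then choosing $\alpha_\mP$ by continuity from above, then noting $\Pr[\mu_2>1-\alpha_\mP]>0$ from positive density) is the intended argument.

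You go beyond the paper in flagging the borderline case $\mu_1^0=\mu_2^0$, where Corollary~\ref{cor:BG} indeed yields nothing (its hypothesis $\Pr[\mu_1=1]<(\mu_1^0-\mu_2^0)/2$ already fails). The paper simply does not treat this case; it is either implicitly excluded or overlooked. Your plan for it is reasonable in spirit but is only a sketch: the assertion that ``\greedy\ persists with arm~$1$ for all $T$ rounds with probability bounded away from zero by a function of $\mP$'' is the whole content of the borderline claim, and you have not proved it. The martingale in Theorem~\ref{thm:BG} now starts at $Z_1=0$, so optional stopping gives only $\Pr[\tau>T]\geq 0$; one would need a genuinely different argument (e.g., condition on $\mu_1$ lying in some interval $(\mu_2^0+\delta,\,\mu_2^0+2\delta)$ and control the probability that the posterior mean of $\mu_1$ ever dips below $\mu_2^0$, which depends on the reward distribution, not just the prior). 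So the borderline case remains open in your write-up as well as in the paper's.
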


%%%%%%%%%%%%%%%%%%%%%%%
\section{``Hidden persuasion" for the general case}
\label{sec:HP}

%\[ \E[G_{1,n} \mid G_{1,n}>0]\, \Pr[G_{1,n}>0] \]

We introduce a general technique for Bayesian persuasion, called \HP,  which we then use to incentivize exploration. The idea is to \emph{hide a little persuasion in a lot of exploitation}. We define a messaging policy which
inputs a signal $\iSig\in \sigSpace$, and
outputs a recommended arm $\rec\in \{1,2\}$. With a given probability $\eps$, we recommend what we actually want to happen, as described by the (possibly randomized) \emph{target function}
    $\ExploreArm: \sigSpace \to \{1,2\}$.
Otherwise we \emph{exploit}, \ie choose an arm that
maximizes  $\E\sbr{\mu_a \mid \iSig}$.

%It is essential that the persuasion branch is determined by $\iSig$, \ie cannot use more information than the exploitation branch.

\vspace{-4mm}

\LinesNotNumbered \SetAlgoLined
\begin{algorithm}[h]
\caption{\HP with signal $\iSig$.}
\label{BIC:alg:basic}
\DontPrintSemicolon
{\bf Parameters:} probability $\eps>0$,
    function $\ExploreArm: \sigSpace \to \{1,2\}$.\;
{\bf Input:} signal realization $S \in \sigSpace$.\;
{\bf Output:} recommended arm $\rec$.\;\vspace{2mm}
With probability $\eps>0$,
    \tcp*{persuasion branch}
%\algTAB choose $\rec$ as an arbitrary function $\ExploreFn[\cdot]$ of $\iSig$
\algTAB $\rec \leftarrow \ExploreFn[S]$ \;
else \tcp*{exploitation branch}
\algTAB
    $\rec \leftarrow
        %\ExploitFn[S] :=
        \min\rbr{ \arg\max_{a\in \{1,2\}} \E\sbr{\mu_a \mid \iSig = S}}$
\end{algorithm}
\vspace{-2mm}

%Other than that, \HP is a version of the \EpsGreedy algorithm from Chapter~\ref{sec:bandits}.

We are interested in the (single-round) BIC property:
\begin{align}\label{BIC:eqn:bic-basic}
\rec = \argmaxARMS{ \E\sbr{\mu_a \mid \rec} }.
%\Pr[\rec=a] >0 \;\;\Rightarrow\;\;
%\E\left[\;
%    \mu_a-\mu_{a'} \mid \rec=a
%\;\right]
%    \geq 0
%\qquad\forall \text{arms $a\neq a'$}.
\end{align}
We prove that \HP satisfies this property when the persuasion probability $\eps$ is sufficiently small, so that the persuasion branch is offset by exploitation. A key quantity here is a random variable which summarizes the meaning of $\iSig$:
\[ G := \E[\mu_2-\mu_1 \mid \iSig]
\qquad\qquad\EqComment{posterior gap}.
\]
%which we call \emph{posterior gap} given signal $\iSig$.

\begin{lemma}\label{BIC:lm:basic}
\HP with persuasion probability $\eps>0$ is BIC, for any target function $\ExploreArm$,
as long as
    $\eps < \tfrac13\,\E\left[ G\cdot \ind{G>0} \right]$.
\end{lemma}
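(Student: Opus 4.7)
My plan is to analyze the BIC condition \eqref{BIC:eqn:bic-basic} separately on the events $\{\rec = 2\}$ and $\{\rec = 1\}$, working with the unnormalized expectations $\E[(\mu_b - \mu_a)\, \ind{\rec = a}]$: these have the same sign as the conditional expectation $\E[\mu_b - \mu_a \mid \rec = a]$ whenever $\Pr[\rec = a] > 0$, and the BIC constraint is vacuous on $\{\rec = a\}$ otherwise. Since ties break toward arm $1$, the $\rec = 2$ case requires a strict inequality $\E[(\mu_2 - \mu_1)\, \ind{\rec = 2}] > 0$, while the $\rec = 1$ case requires only $\E[(\mu_1 - \mu_2)\, \ind{\rec = 1}] \geq 0$.

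Decomposition is the core tool. The branch coin of \HP is independent of $(\iSig, \mu)$, so conditioning on which branch was taken yields
\[
\E[(\mu_2 - \mu_1)\, \ind{\rec = 2}] = (1-\eps)\, \E[(\mu_2 - \mu_1)\, \ind{G > 0}] + \eps\, \E[(\mu_2 - \mu_1)\, \ind{\ExploreArm(\iSig) = 2}].
\]
Both indicator events are measurable with respect to $\iSig$, so Fact~\ref{BIC:fact:iterated-expectations} lets me replace the factor $\mu_2 - \mu_1$ by $G$ inside each expectation. Writing $P := \E[G\, \ind{G > 0}]$ and $N := -\E[G\, \ind{G \leq 0}]$ (both nonnegative), the right-hand side becomes $(1-\eps)\, P + \eps\, \E[G\, \ind{\ExploreArm(\iSig) = 2}]$, and an analogous identity (with signs flipped) holds for $\{\rec = 1\}$.

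The main step is handling the arbitrary target function $\ExploreArm$: adversarially, $\ExploreArm(\iSig) = 2$ could hold exactly on $\{G \leq 0\}$, driving the persuasion term down to $\E[G\, \ind{G \leq 0}] = -N$. This gives $\E[(\mu_2 - \mu_1)\, \ind{\rec = 2}] \geq (1-\eps)\, P - \eps\, N$. The quantitative observation is that $P + N = \E[|G|] \leq 1$ because $|G| \leq 1$, so the hypothesis $\eps < P/3$ yields $\eps (P + N) < P$, i.e., $(1-\eps)\, P > \eps\, N$, which is the required strict inequality (and the hypothesis is vacuous when $P = 0$). For $\rec = 1$, the symmetric worst case is $\ExploreArm(\iSig) = 1$ on $\{G > 0\}$, giving $\E[(\mu_1 - \mu_2)\, \ind{\rec = 1}] \geq (1-\eps)\, N - \eps\, P$; the assumption $\mu_1^0 \geq \mu_2^0$ forces $\E[G] \leq 0$ and hence $N \geq P$, so this bound is nonnegative whenever $\eps \leq 1/2$, which is comfortably met by $\eps < P/3 \leq 1/3$. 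Combining the two cases completes the proof.
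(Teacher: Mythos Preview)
Your proof is correct and follows essentially the same approach as the paper: decompose $\E[(\mu_2-\mu_1)\,\ind{\rec=2}]$ by branch, replace $\mu_2-\mu_1$ with $G$ via Fact~\ref{BIC:fact:iterated-expectations}, and lower-bound the persuasion term adversarially by $\E[G\,\ind{G\leq 0}]$. The only cosmetic differences are that the paper dispatches the $\rec=1$ case by a one-line reduction (summing the two conditional expectations to $\E[\mu_2-\mu_1]\leq 0$) rather than your direct bound via $N\geq P$, and its final step bounds $2P+\E[\mu_1-\mu_2]\leq 3$ whereas you use the equivalent but tighter $P+N=\E[|G|]\leq 1$.
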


\begin{remark}\label{rem:HP-works}
A suitable $\eps>0$ exists if and only if $\Pr[G>0]>0$.
Indeed, if $\Pr[G>0]>0$ then $\Pr[G>\delta]=\delta'>0$ for some $\delta>0$, so
\begin{align*}
\E\left[ G\cdot \ind{G>0} \right]
    \geq \E\left[ G\cdot \ind{G>\delta} \right]
    = \Pr[G>\delta]\cdot \E[G\mid G>\delta]
    \geq  \delta\cdot\delta'>0.
\end{align*}
\end{remark}
%In particular, this is the case when \propref{BIC:eq:prop} holds and the signal $\iSig$ consists of $n_\mP$ samples from arm $1$.

The rest of this section proves Lemma~\ref{BIC:lm:basic}. To keep the exposition elementary, we assume that the universe $\sigSpace$ is finite.%
\footnote{Otherwise our proof requires a more advanced notion of conditional expectation.}
We start with an easy observation: for any algorithm, it suffices to guarantee the BIC property when arm $2$ is recommended.

\begin{claim}
Assume
    \eqref{BIC:eqn:bic-basic}
holds for arm $\rec=2$. Then it also holds for $\rec=1$.
\end{claim}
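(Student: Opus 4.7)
The plan is to leverage the prior preference $\mu_1^0 \geq \mu_2^0$ together with a one-line application of the law of iterated expectation to the random variable $X := \mu_2 - \mu_1$, conditioned on the recommended arm $\rec$. Nothing about the specific structure of \HP should be needed.

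First, I would rewrite the BIC condition \eqref{BIC:eqn:bic-basic} in gap form. Because the ``$\min$'' tie-breaking in the agent's decision rule favors arm $1$, the condition for $\rec=2$ says that $\E[\mu_2-\mu_1\mid\rec=2]>0$ (strictly), whereas the condition for $\rec=1$ is the weaker statement $\E[\mu_2-\mu_1\mid\rec=1]\leq 0$. Thus it suffices to deduce the latter from the former.

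Second, I would apply Fact~\ref{BIC:fact:iterated-expectations} (or, since $\rec$ takes only two values, a one-step case split) to obtain
\[
\mu_2^0-\mu_1^0
  \;=\; \E[\mu_2-\mu_1]
  \;=\; \Pr[\rec=1]\cdot \E[\mu_2-\mu_1\mid \rec=1]
      + \Pr[\rec=2]\cdot \E[\mu_2-\mu_1\mid \rec=2].
\]
By the standing assumption $\mu_1^0\geq \mu_2^0$, the left-hand side is $\leq 0$. By the hypothesis that \eqref{BIC:eqn:bic-basic} holds at $\rec=2$, the second term on the right is nonnegative. Hence the first term must be $\leq 0$, and if $\Pr[\rec=1]>0$ we conclude $\E[\mu_2-\mu_1\mid \rec=1]\leq 0$, which is exactly BIC at $\rec=1$. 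The corner case $\Pr[\rec=1]=0$ makes the conclusion vacuous.

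I do not anticipate any real obstacle; the only subtlety is keeping track of strict versus weak inequalities forced by the tie-breaking rule, which is why the asymmetric direction of the argument (``arm $2$ implies arm $1$'', not the reverse) is essential and lines up with the hypothesis of the claim.
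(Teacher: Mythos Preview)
Your proposal is correct and is essentially the paper's own proof: both decompose $\E[\mu_2-\mu_1]\leq 0$ via the law of total expectation over $\{\rec=1\}$ and $\{\rec=2\}$, then use the strict positivity at $\rec=2$ to force nonpositivity at $\rec=1$. The only cosmetic difference is that the paper handles the degenerate case $\Pr[\rec=2]=0$ explicitly up front, whereas you instead note that $\Pr[\rec=1]=0$ makes the conclusion vacuous; both treatments are fine.
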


\begin{proof}
If arm $2$ is never recommended, then the claim holds trivially since
    $\mu_1^0 \geq \mu_2^0$.
Now, suppose both arms are recommended with some positive probability. Then
\begin{align*}
0   \geq \E[\mu_2 - \mu_1]
    = \textstyle \sum_{a\in\{1,2\}}\; \E[\mu_2-\mu_1 \mid \rec=a]\,\Pr[\rec=a].
%    &= \E[\mu_2-\mu_1 \mid \rec=2]\,\Pr[\rec=2]
%        + \E[\mu_2-\mu_1 \mid \rec=1]\,\Pr[\rec=1].
\end{align*}
Since
    $\E[\mu_2-\mu_1 \mid \rec=2] > 0$
by the BIC assumption,
%for arm $2$, it follows that
    $\E[\mu_2-\mu_1 \mid \rec=1] < 0$.
\end{proof}

Thus, we need to prove \eqref{BIC:eqn:bic-basic} for $\rec=2$, \ie that
\begin{align}\label{BIC:eqn:bic-basic-2}
    \E\sbr{ \mu_2-\mu_1 \mid \rec=2 } > 0.
\end{align}
(We note that $\Pr[\mu_2-\mu_1]>0$, \eg because $\Pr[G>0]>0$, as per Remark~\ref{rem:HP-works}).

Denote the event $\{\rec=2\}$ with $\mE_2$.
By Fact~\ref{BIC:fact:iterated-expectations},
    $ \E\sbr{ \mu_2-\mu_1 \mid \mE_2 }  = \E[G \mid \mE_2]$.
\footnote{This is the only step in the analysis where it is essential that both the persuasion and exploitation branches (and therefore event $\mE_2$) are determined by the signal $\iSig$.}

We focus on the posterior gap $G$ from here on. More specifically, we work with expressions of the form
    $ F(\mE) := \E\sbr{G\cdot \indE{\mE}}$,
where $\mE$ is some event. Proving \refeq{BIC:eqn:bic-basic-2} is equivalent to proving that
    $ F(\mE_2) > 0$;
we prove the latter in what follows.

We will use the following fact:
\begin{align}\label{BIC:eq:F-disjoint}
    F(\mE \cup \mE') = F(\mE) + F(\mE')
\quad \text{for any disjoint events $\mE,\mE'$}.
\end{align}
Letting $\ExploreE$ (resp., $\ExploitE$) be the event that the algorithm chooses persuasion branch (resp., exploitation branch), we can write
\begin{align}\label{BIC:eq:F-U-disjoint}
     F(\mE_2) = F(\ExploreE \eqAND \mE_2) + F(\ExploitE \eqAND \mE_2).
\end{align}

We prove that this expression is non-negative by analyzing the persuasion and exploitation branches separately. For the exploitation branch, the events
    $\{\ExploitE \eqAND \mE_2\}$
and
    $\{\ExploitE \eqAND G>0\}$
are the same by algorithm's specification. Therefore,
\begin{align*}
F(\ExploitE \eqAND \mE_2)
    &=  F(\ExploitE \eqAND G>0)\\
    &= \E[G \mid \ExploitE \eqAND G>0] \cdot \Pr[\ExploitE \eqAND G>0]
        &\EqComment{by definition of $F$} \\
    &= \E[G\mid G>0] \cdot \Pr[G>0]\cdot (1-\eps)
        &\EqComment{by independence} \\
    &= (1-\eps)\cdot F(G>0)
        &\EqComment{by definition of $F$}.
\end{align*}

For the persuasion branch, recall that $F(\mE)$ is non-negative for any event $\mE$ with $G\geq 0$, and non-positive for any event $\mE$ with $G\leq 0$. Therefore,
\begin{align*}
F(\ExploreE \eqAND \mE_2)
    &= F(\ExploreE \eqAND \mE_2 \eqAND G<0)
        + F(\ExploreE \eqAND \mE_2 \eqAND G\geq 0)
       &\EqComment{by \eqref{BIC:eq:F-disjoint}} \\
    &\geq F(\ExploreE \eqAND \mE_2 \eqAND G<0) \\
    &= F(\ExploreE \eqAND G<0) - F(\ExploreE \eqAND \neg \mE_2 \eqAND G<0)
        &\EqComment{by \eqref{BIC:eq:F-disjoint}}\\
    &\geq F(\ExploreE \eqAND G<0)\\
    &= \E[G\mid \ExploreE \eqAND G<0]\cdot \Pr[\ExploreE \eqAND G<0]
        &\EqComment{by defn. of $F$}\\
    &= \E[G\mid G<0]\cdot \Pr[G<0]\cdot \eps
        &\EqComment{by independence}\\
    &= \eps\cdot F(G<0)
    &\EqComment{by defn. of $F$}.
\end{align*}

Putting this together and plugging into \eqref{BIC:eq:F-U-disjoint}, we have
\begin{align}\label{BIC:eq:F-U}
F(\mE_2) \geq \eps\cdot F(G<0) + (1-\eps)\cdot F(G>0).
\end{align}

Now, applying \eqref{BIC:eq:F-disjoint} yet again we see that
   $ F(G<0) + F(G>0) = \E[\mu_2-\mu_1]$.
Plugging this back into \eqref{BIC:eq:F-U} and rearranging, it follows that $F(\mE_2)> 0$ whenever
\[ F(G>0) > \eps \rbr{ 2F(G>0)+\E[\mu_1-\mu_2] }. \]
%\[ \eps \leq \frac{F(G>0)}{2F(G>0)+\E[\mu_1-\mu_2]}. \]
In particular, $\eps< \tfrac13\cdot F(G>0)$ suffices. This completes the proof of Lemma~\ref{BIC:lm:basic}.

\section{Incentivized exploration via ``hidden persuasion"}
\label{sec:repeatedHP}

%\LinesNotNumbered \SetAlgoLined
%\begin{algorithm}[!h]
%\caption{\RepeatedHP.}
%\label{BIC:alg:RepeatedHiddenExploration}
%\DontPrintSemicolon
%Parameters: probability $\eps>0$, exploration function $\ExploreFn[\cdot]$.\;
%\vspace{2mm}
%
%In each round $t=1,2,3,\;\ldots $\;
%\algTAB $\rec_t \quad\leftarrow\quad$ \HP
%    with signal $\iSig_t$.\;
%
%\tcp*{$\mS_t$ is the history in all previous "exploration rounds".}
%\end{algorithm}

Let us develop the hidden persuasion technique into an algorithm for incentivized exploration. We take an arbitrary bandit algorithm \ALG, and consider a repeated version of \HP (called \RepeatedHP), where the persuasion branch executes one call to \ALG. We interpret calls to \ALG as exploration. To get started, we include $N_0$ rounds of ``initial exploration", where arm $1$ is chosen. The exploitation branch conditions on the history of all previous exploration rounds:
\begin{align}\label{eq:repeatedHP-history}
\mS_t = \rbr{ (s,a_s,r_s): \text{all exploration rounds $s<t$}}.
\end{align}

\LinesNotNumbered \SetAlgoLined
\begin{algorithm}[!h]
\caption{\RepeatedHP with bandit algorithm \ALG.}
\label{BIC:alg:reduction}
\DontPrintSemicolon
{\bf Parameters:} $N_0\in\N$, exploration probability $\eps>0$\;
%\vspace{2mm}
In the first $N_0$ rounds, recommend arm $1$. \tcp*{initial exploration}
%\vspace{2mm}
In each subsequent round $t$,\;
\algTAB With probability $\eps$ \tcp*{explore}
\algTAB \algTAB call \ALG, let $\rec_t$ be the chosen arm,
feed reward $r_t$ back to \ALG. \;
\algTAB else \tcp*{exploit}
\algTAB \algTAB
    $\rec_t \leftarrow
        %\ExploitFn[\mS_t] :=
        \min\rbr{\arg\max_{a\in \{1,2\}} \E[\mu_a \mid \mS_t] }$.
        \tcp*{$\mS_t$ from \eqref{eq:repeatedHP-history}}
\end{algorithm}
\vspace{-4mm}

%The resulting algorithm, called \RepeatedHP, is as follows:

\begin{remark}
\RepeatedHP can be seen as a reduction from bandit algorithms to BIC bandit algorithms. The simplest version always chooses arm $2$ in exploration rounds, and (only) provides non-adaptive exploration. For better regret bounds, \ALG needs to perform adaptive exploration, as per Section~\ref{sec:bandits}.
\end{remark}

Each round $t>N_0$ can be interpreted as \HP with signal $\mS_t$,
%, the history of all previous exploration rounds (including the first $N_0$ rounds).
where the ``target function" executes one round of algorithm \ALG. Note that $\rec_t$ is determined by $\mS_t$ and the random seed of \ALG, as required by the specification of \HP. Thus, Lemma~\ref{BIC:lm:basic} applies, and yields the following corollary in terms of
    $G_t = \E[\mu_2-\mu_1 \mid \mS_t]$,
the posterior gap given signal $\mS_t$.

\begin{corollary}\label{BIC:cor:basic}
\RepeatedHP is BIC if
    $\eps < \tfrac13\,\E\left[ G_t\cdot \ind{G_t>0} \right]$
for each time $t>N_0$.
\end{corollary}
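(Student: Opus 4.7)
The plan is to verify the BIC property round by round by reducing each round to a one-shot instance of \HP and invoking Lemma~\ref{BIC:lm:basic}. The exposition splits into two regimes according to whether $t\le N_0$ or $t>N_0$.

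For the initial exploration rounds $t\le N_0$, the recommendation is $\rec_t=1$ deterministically, and agent $t$ observes no rewards from previous rounds. Conditional on $\recE_{t-1}$, the recommendation itself conveys no information (it is a constant), so the agent's posterior over $\mu$ coincides with the prior. Then the right-hand side of \eqref{BIC:eqn:bic-constraint} reduces to $\argmaxARMS{\mu_a^0}$, which equals $1$ by the assumption $\mu_1^0\geq \mu_2^0$ and the tie-breaking rule. Thus BIC is trivial in this regime.

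For $t>N_0$, I would recognize round $t$ as an instance of Algorithm~\ref{BIC:alg:basic} (\HP) with signal $\iSig := \mS_t$ and target function ``invoke one round of \ALG''. This identification works cleanly once we condition on $\recE_{t-1}$: under this event, the algorithm's bookkeeping matches reality, so $\mS_t$ from \eqref{eq:repeatedHP-history} is precisely the principal's sufficient statistic, the persuasion branch of \RepeatedHP is the persuasion branch of \HP (with \ALG's random seed serving as the internal randomness of the ``possibly randomized'' target function permitted by Lemma~\ref{BIC:lm:basic}), and the exploitation branch literally computes $\min\rbr{\arg\max_{a} \E[\mu_a\mid \iSig=\mS_t]}$ in both algorithms. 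With this identification, the quantity $G$ appearing in Lemma~\ref{BIC:lm:basic} equals $G_t=\E[\mu_2-\mu_1\mid \mS_t]$, the lemma's hypothesis $\eps<\tfrac13\,\E[G\cdot\ind{G>0}]$ is exactly the hypothesis of the corollary, and its conclusion $\rec_t=\argmaxARMS{\E[\mu_a\mid\rec_t]}$, read in the probability space conditioned on $\recE_{t-1}$, is exactly \eqref{BIC:eqn:bic-constraint}.

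The one subtle step is the translation between the one-shot BIC \eqref{BIC:eqn:bic-basic} proved by Lemma~\ref{BIC:lm:basic} and the sequential BIC of Definition~\ref{def:BIC}: one must check that conditioning on $\recE_{t-1}$ is what promotes the per-round lemma statement into the sequential definition, and that the expectation defining $G_t$ in the corollary is consistent with the expectation computed in the lemma once one works inside the conditioned probability space. This is the main (and essentially only) obstacle; no additional estimates are needed, since both the identification of \HP as a sub-routine and the invocation of Lemma~\ref{BIC:lm:basic} are immediate.
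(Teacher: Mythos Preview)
Your proposal is correct and follows essentially the same approach as the paper: recognize each round $t>N_0$ as an instance of \HP with signal $\iSig=\mS_t$ and target function given by one call to \ALG, then invoke Lemma~\ref{BIC:lm:basic}. You are in fact more explicit than the paper, which states the corollary as an immediate consequence and does not spell out the trivial case $t\le N_0$ or the role of conditioning on $\recE_{t-1}$.
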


For the final BIC guarantee, we show that it suffices to focus on $t=N_0+1$.

\begin{theorem}\label{BIC:thm:reduction-BIC}
\RepeatedHP with exploration probability $\eps>0$ and $N_0$ initial samples of arm $1$ is BIC as long as
    $\eps < \tfrac13\,\E\left[ G\cdot \ind{G>0} \right]$,
where $G = G_{N_0+1}$.
\end{theorem}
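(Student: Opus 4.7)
The plan is to combine Corollary~\ref{BIC:cor:basic} with a martingale monotonicity argument showing that $\E[G_t^+]$ is non-decreasing in $t$, where $x^+ := x\cdot \ind{x>0}$. By the corollary, \RepeatedHP is BIC as soon as $\eps < \tfrac{1}{3}\,\E[G_t \cdot \ind{G_t > 0}]$ holds for every round $t > N_0$, whereas the theorem only hypothesizes this inequality at $t = N_0+1$. Thus the task reduces to showing that the map $t \mapsto \E[G_t \cdot \ind{G_t > 0}]$ attains its minimum over $t > N_0$ at $t = N_0+1$.

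First I would observe that $(G_t)_{t > N_0}$ is a Doob martingale. The signal $\mS_t$ from \refeq{eq:repeatedHP-history} only appends information as $t$ grows (a fresh tuple $(s,a_s,r_s)$ for each exploration round, nothing for exploitation rounds), so the filtration $\mathcal{F}_t := \sigma(\mS_t)$ is non-decreasing in $t$. By the tower property, $\E[G_t \mid \mathcal{F}_s] = \E[\mu_2 - \mu_1 \mid \mathcal{F}_s] = G_s$ for $s \leq t$. Here one uses that the coin flips governing explore/exploit and the internal randomness of \ALG are independent of $\mu$, so by Fact~\ref{BIC:fact:iterated-expectations} they carry no information about $\mu_2 - \mu_1$ beyond what is already in $\mathcal{F}_t$.

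Next, I would invoke the conditional Jensen inequality for the convex map $\phi(x) := \max(x,0) = x \cdot \ind{x > 0}$: from $G_s = \E[G_t \mid \mathcal{F}_s]$ it follows that $\phi(G_s) \leq \E[\phi(G_t) \mid \mathcal{F}_s]$, so $(G_t^+)$ is a submartingale. Taking unconditional expectations yields $\E[G_s^+] \leq \E[G_t^+]$ for $s \leq t$, which is the required monotonicity. Applied at $s = N_0+1$, this gives $\E[G \cdot \ind{G > 0}] \leq \E[G_t \cdot \ind{G_t > 0}]$ for every $t > N_0$, so the hypothesis of the theorem implies the hypothesis of Corollary~\ref{BIC:cor:basic}, and the theorem follows.

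The main obstacle I expect is a small bookkeeping issue with the filtration: one must verify that conditioning on the (random-size) history $\mS_t$ really does generate an increasing filtration despite the algorithm's own coin flips, and that these coin flips carry no information about $\mu$. This is exactly the role of Fact~\ref{BIC:fact:iterated-expectations}; once it is applied, everything else reduces to the standard fact that convex functions of martingales are submartingales.
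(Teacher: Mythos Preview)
Your proposal is correct and follows essentially the same route as the paper: reduce to Corollary~\ref{BIC:cor:basic} and then show $t\mapsto \E[G_t\cdot\ind{G_t>0}]$ is non-decreasing via the martingale property of $(G_t)$. The only cosmetic difference is packaging: you invoke conditional Jensen for $\phi(x)=x^+$ to get a submartingale, whereas the paper writes the same step out by hand, using Fact~\ref{BIC:fact:iterated-expectations} to obtain $\E[G_t\cdot\ind{G_t>0}] = \E[G_{t+1}\cdot\ind{G_t>0}]$ and then the pointwise inequality $x\cdot\ind{\cdot}\le x\cdot\ind{x>0}$.
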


\begin{proof}
The only remaining piece is the claim that the quantity
    $\E\left[ G_t\cdot \ind{G_t>0} \right]$
does not decrease over time. This claim holds for any sequence of signals
    $(\mS_1,\mS_2 \LDOTS S_T)$
such that each signal $S_t$ is determined by the next signal $S_{t+1}$.

Fix round $t$. Applying Fact~\ref{BIC:fact:iterated-expectations} twice, we obtain
\begin{align*}
\E[G_t \mid G_t >0]
    = \E[\mu_2-\mu_1 \mid G_t>0]
    = \E[G_{t+1} \mid G_t >0].
\end{align*}
(The last equality uses the fact that $S_{t+1}$ determines $S_t$.) Then,
\begin{align*}
\E\left[ G_t\cdot \ind{G_t>0} \right]
    &= \E[G_t \mid G_t >0]\cdot \Pr[G_t>0] \\
    &= \E[G_{t+1} \mid G_t >0]\cdot \Pr[G_t>0] \\
    &= \E\left[ G_{t+1}\cdot \ind{G_t>0} \right] \\
    &\leq \E\left[ G_{t+1}\cdot \ind{G_{t+1}>0} \right].
\end{align*}
The last inequality holds because
    $ x\cdot \ind{\cdot} \leq x\cdot \ind{x>0}$
for any $x\in R$.
\end{proof}

%\begin{remark}
%\RepeatedHP can be seen as a ``reduction" from bandit algorithms to BIC bandit algorithms.  takes an arbitrary bandit algorithm $\ALG$ and turns it into a BIC bandit algorithm (which runs \ALG in the ``exploration rounds"). One can view this as
%\end{remark}

\begin{remark}\label{rem:repeatedHP-condition}
The theorem focuses on the posterior gap $G$ given $N_0$ initial samples from arm $1$.
%(So, $G$ is the same as  $G_{1,n}$, as defined in~\eqref{BIC:eq:gap1}, with $n=N_0$.)
The theorem requires parameters $\eps>0$ and $N_0$ to satisfy some condition that depends only on the prior. Such parameters exist, if and only if $\Pr[G>0]>0$ for some $N_0$. (This is for precisely the same reason as in Remark~\ref{rem:HP-works}.) The latter condition is in fact necessary, as we will see in Section~\ref{sec:fighting-chance}.
%This is precisely the ``fighting chance" condition \eqref{BIC:eq:prop}, which is also necessary, as per Theorem~\ref{BIC:thm:LB}.
\end{remark}

\newcommand{\uN}{\underline{N}}

Performance guarantees for \RepeatedHP completely separated from the BIC guarantee, in terms of results as well as proofs. Essentially, \RepeatedHP learns at least as fast as an appropriately slowed-down version of \ALG. There are several natural ways to formalize this, in line with the standard performance measures for multi-armed bandits. For notation,
let $\REW^{\ALG}(n)$ be the total reward of \ALG in the first $n$ rounds of its execution, and let $\BReg^{\ALG}(n)$ be the corresponding Bayesian regret.
%Also, let $r^\ALG(n)$ be the instantaneous regret (a.k.a. simple regret) of \ALG in round $n$.

\begin{theorem}\label{thm:reduction-perf}
Consider \RepeatedHP with exploration probability $\eps>0$ and $N_0$ initial samples.
 Let $N$ be the number of exploration rounds $t>N_0$. %
\footnote{Note that $\E[N] = \eps(T-N_0)$, and
    $|N-\E[N]|\leq O(\sqrt{T\,\log T})$
with high probability.}
Then:

\begin{itemize}
\item[(a)] If $\ALG$ always chooses arm $2$,  \RepeatedHP chooses arm $2$ at least N times.

\item[(b)] The expected reward of  \RepeatedHP is at least
        $\tfrac{1}{\eps}\, \E\sbr{ \REW^{\ALG}(N) }$.

\item[(c)] Bayesian regret of  \RepeatedHP is
        $ \BReg(T) \leq N_0+\tfrac{1}{\eps}\, \E\sbr{ \BReg^{\ALG}(N) }$.
%\item[(d)] Expected instantaneous regret of \RepeatedHP at round $T$ is at most $\E[r^{\ALG}(N)]$.

\end{itemize}
\end{theorem}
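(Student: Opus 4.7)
My approach would split the proof along the three parts. Part (a) is essentially tautological: in each of the $N$ rounds $t>N_0$ in which \RepeatedHP enters the explore branch, by hypothesis \ALG returns arm $2$, which is then recommended and therefore played (the policy is BIC by Theorem~\ref{BIC:thm:reduction-BIC}). Since these $N$ rounds are distinct, arm $2$ is played at least $N$ times.

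For (b) and (c) I would set up a common per-round comparison. Let $Y_t\in\{0,1\}$ be the explore-vs-exploit indicator at round $t>N_0$ (i.i.d.\ Bernoulli$(\eps)$), and let $X_t\in\{1,2\}$ denote the arm \ALG would return next given its input $\mS_t$; note that $X_t$ is determined by $\mS_t$ together with \ALG's independent random seed, and equals the realised arm $a_t$ whenever $Y_t=1$. Since the exploit branch by construction yields conditional expected reward $u_t:=\max_a \E[\mu_a\mid \mS_t]\ge \E[\mu_{X_t}\mid \mS_t]$, we obtain the key per-round inequality
\[
\E[r_t\mid \mS_t]\;=\;\eps\,\E[\mu_{X_t}\mid \mS_t]+(1-\eps)\,u_t\;\ge\;\E[\mu_{X_t}\mid \mS_t].
\]
Summing over $t>N_0$ gives $\sum_{t>N_0}\E[r_t]\ge \sum_{t>N_0}\E[\mu_{X_t}]$ for (b), and symmetrically $\sum_{t>N_0}\E[\max(\mu_1,\mu_2)-r_t]\le \sum_{t>N_0}\E[\max(\mu_1,\mu_2)-\mu_{X_t}]$ for (c).

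The remaining task is to identify the right-hand sides with \ALG's own performance, at the price of a factor $1/\eps$. Since $Y_t$ is a fresh coin independent of the triple $(\mu,\mS_t,X_t)$, we have $\E[\mu_{X_t}]=\eps^{-1}\,\E[Y_t\,\mu_{X_t}]$. Summing and using that $\E[r_t\mid Y_t=1,\mS_t,X_t]=\mu_{X_t}$, I would conclude
\[
\sum_{t>N_0}\E[\mu_{X_t}]\;=\;\eps^{-1}\,\E\!\left[\sum_{t>N_0}Y_t\,r_t\right]\;=\;\eps^{-1}\,\E[\REW^{\ALG}(N)],
\]
because $\sum_{t>N_0}Y_t\,r_t$ is exactly \ALG's total reward over its $N$ executed steps. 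This proves (b). For (c), running the identical manipulation with $\max(\mu_1,\mu_2)-\mu_{X_t}$ in place of $\mu_{X_t}$ (and using that $N$ is independent of $\mu$, so $\E[N\max(\mu_1,\mu_2)]=\E[N]\,\E[\max(\mu_1,\mu_2)]$) yields $\eps^{-1}\,\E[\BReg^{\ALG}(N)]$, and the first $N_0$ rounds contribute at most $1$ per round to the Bayesian regret, producing the additive $N_0$ term.

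The main obstacle I anticipate is the bookkeeping in the final step: carefully verifying that $Y_t$ is independent of $(\mu,\mS_t,X_t)$ (which holds because $\mS_t$ is built only from $Y_1,\dots,Y_{t-1}$, the previously realised rewards, and \ALG's seed), and coupling the stopped sum $\sum_{t>N_0}Y_t\,r_t$ with $\REW^{\ALG}(N)$ despite the randomness of $N$. Once those are pinned down, the rest is routine.
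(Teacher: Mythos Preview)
Your proposal is correct and matches the paper's approach: the key inequality ``exploitation reward $\geq$ exploration reward in the same round'' is exactly what the sketch calls for, and your per-round indicator decomposition $\E[\mu_{X_t}]=\eps^{-1}\E[Y_t\,\mu_{X_t}]$ is the Wald-type identity the paper invokes. The only stylistic difference is that the paper derives (c) directly from (b) (using $\E[\mu^*]\le 1$ to absorb the first $N_0$ rounds) rather than rerunning the argument, but your route is equally valid.
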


\begin{proof}[Proof Sketch]
Part (a) is obvious. Part (c) trivially follows from part (b). The proof of part (b) invokes Wald's identify and the fact that the expected reward in ``exploitation" is at least as large as in ``exploration" for the same round.
\end{proof}

%The proof relies on Wald's identity, see Exercise~\ref{BIC:ex:reduction-perf}.

%The right-hand side is at most $N_0+ \tildeO(\sqrt{T/\eps})$ if
 %       $\BReg^{\ALG}(n) = \tildeO(\sqrt{n})$.

%\begin{remark}\label{BIC:rem:price-of-BIC}
We match the Bayesian regret of \ALG up to by factors $N_0,\,\tfrac{1}{\eps}$ that depend only on the prior. So, we match the optimal regret for a given prior $\mP$ if \ALG is optimal for $\mP,T$. We achieve the $\tildeO(\sqrt{T})$ regret bound for all problem instances, \eg using \term{AdaptiveRace} from Section~\ref{sec:bandits}. The prior-dependent factors can be arbitrarily large, depending on the prior; we interpret this as the ``price of incentives".
%\end{remark}

\section{A necessary and sufficient assumption on the prior}
\label{sec:fighting-chance}

We need to restrict the prior $\mP$ so as to give the algorithm a fighting chance to convince some agents to try arm $2$. (Recall that $\mu_1^0\geq \mu_2^0$.) Otherwise the problem is just hopeless.
For example, if $\mu_1$ and $\mu_1-\mu_2$ are independent, then samples from arm $1$ have no bearing on the conditional expectation of $\mu_1-\mu_2$, and therefore cannot possibly incentivize any agent to try arm $2$.

We posit that arm $2$ \emph{can} appear better after seeing sufficiently many samples of arm $1$. Formally, we consider the posterior gap given $n$ samples from arm $1$:
\begin{align}\label{BIC:eq:gap1}
    G_{1,n} := \E[\, \mu_2-\mu_1 \mid \samples{n} ],
\end{align}
where $\samples{n}$ denotes an ordered tuple of $n$ independent samples from arm $1$.
We focus on the property that this random variable can be positive:
\begin{align}\label{BIC:eq:prop}
\Pr\left[ G_{1,n} > 0 \right]>0 \quad
\text{for some prior-dependent constant $n=n_\mP<\infty$}.
\end{align}
For independent priors, this property can be simplified to
    $\Pr[\mu_2^0>\mu_1]>0$.
Essentially, this is because
    $G_{1,n} = \mu_2^0 - \E[\mu_1 \mid \samples{n}]\to \mu_2^0 - \mu_1$.

We prove that Property~\eqref{BIC:eq:prop} is necessary for BIC bandit algorithms. Recall that it is sufficient for \RepeatedHP, as per Remark~\ref{rem:repeatedHP-condition}.

%More precisely, we focus on algorithms that satisfy the BIC constraint \eqref{BIC:eqn:bic-constraint} with strict inequality, call them \emph{strongly BIC}. This restriction is unnecessary if we rule out ties in \eqref{BIC:eqn:bic-constraint}, as per Remark~\ref{rem:ties}.

\begin{theorem}\label{BIC:thm:LB}
Absent \eqref{BIC:eq:prop}, any BIC algorithm never plays arm $2$.
\end{theorem}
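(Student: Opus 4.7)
I would argue by induction on $t$ that any BIC algorithm satisfies $\Pr[\rec_t=2]=0$ in every round. The induction hypothesis says that in all previous rounds $s<t$ the algorithm recommended arm $1$ almost surely; by the tie-breaking-to-arm-$1$ convention and the BIC condition applied to those rounds, the compliance event $\recE_{t-1}$ then holds with probability $1$, and the principal's accumulated observations form an ordered tuple $\mS_t$ of $t-1$ i.i.d.\ samples from arm $1$ (conditional on $\mu_1$). In particular, $\mS_t$ has the same distribution as $\samples{t-1}$ from \eqref{BIC:eq:gap1}, so
\[
\E[\mu_2-\mu_1 \mid \mS_t] = G_{1,t-1}.
\]

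Now suppose for contradiction that $\Pr[\rec_t=2]>0$. The recommendation $\rec_t$ is a measurable function of $\mS_t$ and an internal random seed $Z_0$ that is independent of $\mu$. Applying Fact~\ref{BIC:fact:iterated-expectations} with $X = \mu_2-\mu_1$, $Y = \mS_t$, and $Z$ the pair $(\rec_t=2,\recE_{t-1})$ (which is determined by $\mS_t$ together with $Z_0$), we obtain
\[
\E\bigl[\mu_2-\mu_1 \,\big|\, \rec_t=2,\,\recE_{t-1}\bigr]
\;=\; \E\bigl[G_{1,t-1} \,\big|\, \rec_t=2,\,\recE_{t-1}\bigr].
\]
The BIC constraint \eqref{BIC:eqn:bic-constraint}, together with the fact that ties are resolved in favor of arm $1$, forces the left-hand side to be \emph{strictly} positive. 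But the negation of \eqref{BIC:eq:prop} says $\Pr[G_{1,n}>0]=0$ for every finite $n$, hence $G_{1,t-1}\le 0$ almost surely, so the right-hand side is at most $0$. This contradiction closes the induction. The base case $t=1$ is identical: $\mS_1$ is empty, $G_{1,0}=\mu_2^0-\mu_1^0\le 0$, and $\rec_1$ is independent of $\mu$, so $\E[\mu_2-\mu_1\mid \rec_1=2]=\mu_2^0-\mu_1^0\le 0$.

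\textbf{Main obstacle.} The only delicate point is bookkeeping around the conditioning event. One must verify (i) that under the inductive hypothesis $\recE_{t-1}$ truly has probability one, so conditioning on it does not disturb the joint law of $(\mu,\mS_t)$, and (ii) that the event $\{\rec_t=2\}$ is measurable with respect to $(\mS_t,Z_0)$ with $Z_0\perp\mu$, which is precisely the hypothesis needed to invoke Fact~\ref{BIC:fact:iterated-expectations}. Once those two points are in place, the rest is the clean tower-property computation above.
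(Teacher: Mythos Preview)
Your proposal is correct and follows essentially the same approach as the paper: induction on $t$, using the inductive hypothesis to conclude that the history consists only of arm-$1$ samples, then applying Fact~\ref{BIC:fact:iterated-expectations} to replace $\E[\mu_2-\mu_1\mid \rec_t=2]$ by $\E[G_{1,t-1}\mid \rec_t=2]\le 0$, contradicting BIC with the strict tie-breaking. You are slightly more explicit than the paper about the conditioning on $\recE_{t-1}$ (which has probability one under the hypothesis) and about why the inequality is strict, but the structure is identical.
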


\begin{proof}
Suppose \propref{BIC:eq:prop} does not hold. Let $\ALG$ be a strongly BIC algorithm.
We prove by induction on $t$ that $\ALG$ cannot recommend arm $2$ to agent $t$.

This is trivially true for $t=1$. Suppose the induction hypothesis is true for some $t$. Then the decision whether to recommend arm $2$ in round $t+1$ (\ie whether $a_{t+1}=2$) is determined by the first $t$ outcomes of arm $1$ and the algorithm's random seed. Letting $U=\{a_{t+1}=2\}$, we have
\begin{align*}
\E[\mu_2-\mu_1 \mid U]
    &= \E\left[\;\; \E[\mu_2-\mu_1 \mid \samples{t}]\;\; \mid U \right]
        &\EqComment{by Fact~\ref{BIC:fact:iterated-expectations}}\\
    &= \E[G_{1,t}\mid U]
        &\EqComment{by definition of $G_{1,t}$}\\
    &\leq 0
        &\EqComment{since \eqref{BIC:eq:prop} does not hold}.
\end{align*}
The last inequality holds because the negation of \eqref{BIC:eq:prop} implies
    $\Pr[G_{1,t}\leq 0]=1$.
This contradicts \ALG being BIC, and completes the induction proof.
\end{proof}

% OLD VERSION
%\propref{BIC:eq:prop} holds for independent priors (with minor caveats). Some such statement is plausible because
%    $G_{1,n} = \mu_2^0 - \E[\mu_1 \mid \samples{n}]$,
%and we expect the latter term to converge to $\mu_1$ as the number of samples grows.
%
%\begin{lemma}\label{BIC:lm:prop-independent}
%\propref{BIC:eq:prop} holds for independent priors, as long as $\mu_1^0>\mu_2^0$ and
%for some interval $[x,y]$ it holds that $\mu_1,\mu_2\in [x,y]$ and have full support on $[x,y]$.~%
%\footnote{That is, $\Pr[\mu_a \in (x',y')]>0$ for both arms $a$ and any
%    $(x',y') \in [x,y]$.}
%\end{lemma}
%
%The proof is a rather subtle argument about Bayesian priors, which we omit. Surprisingly, specializing to independent priors does not appear to simplify the analyses of $\HP$ or \RepeatedHP.
%%, as  \propref{BIC:eq:prop} plugs in there directly and naturally.

\section{Discussion and literature review}
\label{sec:discussion}

To conclude this chapter, we include relevant citations, survey the ``landscape" of incentivized exploration, and place this work in a larger context. A more detailed literature review can be found in \citep[Ch. 11.6]{slivkins-MABbook}.

The technical results in this chapter map to the literature as follows. The results on incentivized exploration in  Sections~\ref{sec:HP}-\ref{sec:fighting-chance} are from
\citet{ICexploration-ec15}. The algorithmic results on bandits presented in  Section~\ref{sec:bandits} are folklore; the lower bounds trace back to
\citet{Lai-Robbins-85,bandits-exp3}. The Bayesian persuasion example in Section~\ref{sec:BP}
follows the technique from \cite{Kamenica-aer11}.
Inefficiency of \greedy has been folklore for decades. The general impossibility result (Theorem~\ref{thm:BG}) first appeared in \citep[Ch. 11]{slivkins-MABbook}, and
is due to \citet{GreedyFails19}.

Incentivized exploration was introduced in \citep{Kremer-JPE14,Che-13}. Our model of incentivized exploration was studied, and largely resolved, in \citep{Kremer-JPE14,ICexploration-ec15,ICexplorationGames-ec16,Selke-PoIE-ec21}.
Results come in a variety of flavors: to wit,
optimal policies for deterministic rewards;
``frequentist" regret bounds (as in Theorem~\ref{thm:adaptiveRace});
extension to $K>2$ arms;
performance loss compared to bandits;
%optimizing the dependence on $K$ and the prior;
and exploring all actions that can possibly be explored. \emph{Algorithms} come in a variety of flavors, too. In particular, \RepeatedHP is made more efficient by inserting a third ``branch" that combines exploration and exploitation, and allows the exploration probability to increase substantially over time. Moreover, \term{ThompsonSampling}, an optimal bandit algorithm mentioned in Section~\ref{sec:bandits}, is BIC for independent priors when initialized with enough samples of each arm. Unlike \RepeatedHP, this algorithm does not suffer from multiplicative prior-dependent blow-up; however, the initial samples should be collected by some other method.

%In particular, one essential source of difficulty is agents with strong beliefs: essentially, if the prior of each arm is based on observing $N$ datapoints, then it takes $e^{\Theta(N)}$ rounds of incentivized exploration to sample arm $2$ even once.

Our model can be made more realistic in three broad directions. First, one can generalize the \emph{exploration} problem, in all ways that one can generalize multi-armed bandits. In particular, \RepeatedHP generalizes well, because it inputs a generic exploration algorithm \ALG \citep{ICexploration-ec15}. However, this approach does not scale to ``structured" exploration problems with exponentially many actions or policies. \cite{IncentivizedRL} address this issue for incentivized reinforcement learning, via a different technique.

Second, one can generalize the \emph{persuasion} problem in our model, in all ways that one can generalize Bayesian persuasion. Prior work considers repeated games and misaligned incentives \citep{ICexplorationGames-ec16},
heterogenous agents \citep{Jieming-multitypes18},
partially known agents' beliefs \citep{ICexploration-ec15,Jieming-multitypes18},
and unavoidable information leakage \citep{Bahar-ec16,Bahar-ec19}.

Third, one can relax the standard (yet strong) economic assumptions that the principal commits to the messaging policy, and the agents
%Indeed, the principal has the ``power to commit" to the announced messaging policy, \ie cannot announce one policy and implement another. The agents are Bayesian-rational, \ie they
optimize their Bayesian-expected rewards. The latter assumption rules out generic issues such as risk aversion, probability matching, or approximate reasoning, as well as problem-specific issues such as aversion to being singled out for exploration, or reluctance to follow a recommendation without supporting evidence.
%In particular, they follow recommendations of a BIC policy, and are sufficiently sophisticated to perform exact Bayesian optimization when needed.
\cite{Jieming-unbiased18} achieve near-optimal regret under weaker assumptions: they only need to specify how agents respond to fixed datasets and allow a flexible ``frequentist" response thereto.

% Lastly, they are myopic, \eg they do not explore for themselves.

%\citep{ICexploration-ec15,Jieming-multitypes18,Jieming-unbiased18}

Related, but technically different models of incentivized exploration feature
time-discounted utilities \citep{Bimpikis-exploration-ms17};
monetary incentives \citep{Frazier-ec14,Kempe-colt18};
continuous information flow \citep{Che-13}; and coordination of costly ``exploration decisions" which are separate from ``payoff-generating decisions" \citep{Bobby-Glen-ec16,Annie-ec18-traps,Liang-ec18}. \greedy (full revelation) works well for heterogenous agents, under strong assumptions on the structure of rewards and diversity of agent types \citep{kannan2018smoothed,bastani2017exploiting,Greedy-Manish-18,AcemogluMMO19}.

Incentivized exploration is closely related to two prominent subareas of theoretical economics. \emph{Information design}
\citep{BergemannMorris-survey19,Kamenica-survey19}
studies the design of information disclosure policies and incentives that they create. One fundamental model is Bayesian persuasion \citep{Kamenica-aer11}.
%and \cite{Dughmi-survey17} for algorithmic aspects.
\emph{Social learning} \citep{Golub-survey16,Horner-survey16} studies self-interested agents that interact and learn over time in a shared environment.

Multi-armed bandits have been studied since 1950-ies in economics, operations research and computer science, with a big surge in the last two decades coming mostly from machine learning.
This vast literature is covered in various books, the most recent ones are
\citet{slivkins-MABbook} and \citet{LS19bandit-book}.
%\citet{Gittins-book11} for Markovian formulations, and \citet{CesaBL-book,slivkins-MABbook} for connections to economics and game theory.

\newpage

\bibliography{bib-abbrv,bib-slivkins,bib-AGT,bib-bandits}\label{refs}
\bibliographystyle{cambridgeauthordate}

\end{document}